\numberwithin{equation}{section}
\theoremstyle{definition}
\newtheorem*{remark*}{Remark}
\newtheorem{remark}{Remark}
\newcolumntype{L}[1]{>{\raggedright\arraybackslash}p{#1}}
\newcommand{\aD}{(\alpha, D)}
\newcommand{\Xts}{\bm{\mathcal{X}}}
\newcommand{\Yts}{\bm{\mathcal{Y}}}
\newcommand{\tth}{\bm{\theta}}
\newcommand{\X}{\bm{X}}
\newcommand{\Y}{\bm Y}
\newcommand{\Yv}{\Y_{\textnormal{v}}}
\newcommand{\tYv}{\tilde\Y_{\textnormal{v}}}
\newcommand{\Xv}{\X_{\textnormal{v}}}
\newcommand{\Z}{\bm Z}
\newcommand{\F}{\bm F}
\newcommand{\BB}{\bm B}
\newcommand{\dt}{\Delta t}
\newcommand{\dY}{\Delta \bm Y}
\newcommand{\ddY}{\Delta Y}
\newcommand{\dX}{\Delta \bm X}
\newcommand{\ddX}{\Delta X}
\newcommand{\N}{\mathcal{N}}
\newcommand{\MN}{\textnormal{MatNorm}}
\newcommand{\mmu}{\bm \mu}
\newcommand{\bbe}{\bm \beta}
\newcommand{\SSi}{\bm \Sigma}
\newcommand{\pph}{{\bm \varphi}}
\newcommand{\rrh}{\bm \rho}
\newcommand{\VV}{\bm V}
\newcommand{\iid}{\stackrel {\textrm{iid}}{\sim}}
\newcommand{\tmin}{t_{\textnormal{min}}}
\newcommand{\tmax}{t_{\textnormal{max}}}
\newcommand{\rv}[3][1]{#2_{#1},\ldots,#2_{#3}}
\newcommand{\water}{\textnormal{H2O}}
\newcommand{\gly}{\textnormal{GLY}}
\newcommand{\kB}{k_B}
\newcommand{\wt}{wt\%}
\newcommand{\om}{\omega}
\newcommand{\gga}{\bm{\gamma}}
\newcommand{\bt}{\bm{\Upsilon}}
\newcommand{\It}{I_{\bt}}
\newcommand{\bz}{\bm{0}}
\let\vec\relax
\DeclareMathOperator{\vec}{vec}
\newcommand{\eps}{\bm \varepsilon}
\newcommand{\widebar}[1]{\overline{#1}}
\DeclareMathOperator{\MSD}{\textsc{msd}}
\DeclareMathOperator{\acf}{\textsc{acf}}
\newcommand*\ud{\mathop{}\!\mathrm{d}}
\DeclareMathOperator{\cov}{cov}
\DeclareMathOperator{\var}{var}
\DeclareMathOperator{\tr}{tr}
\DeclareMathOperator*{\argmax}{arg\,max}
\DeclareMathOperator*{\argmin}{arg\,min}
\newcommand{\elp}{\ell_{\textnormal{prof}}}
\newcommand{\elc}{\ell_{\textnormal{C}}}
\newcommand{\bigO}{\mathcal O}
\newcommand{\fbm}{B^\alpha}
\DeclareMathOperator{\arma}{ARMA}
\newcommand{\armapq}{\arma(p,q)}
\DeclareMathOperator{\ma}{MA}
\DeclareMathOperator{\ar}{AR}
\newcommand{\mao}{\ma(1)}
\newtheorem{theorem}{Theorem}
\newtheorem{lemma}{Lemma}
\DeclareMathOperator{\se}{se}
\newcommand{\rcov}{\textnormal{P}_{95}}
\newcommand{\SNR}{\textnormal{SNR}}
\newcommand{\id}{\mathfrak 1}
\newcommand{\nd}{k}
\newcommand{\np}{d}
\newcommand{\aeff}{\alpha_{\textnormal{eff}}}
\newcommand{\Deff}{D_{\textnormal{eff}}}
\newcommand{\PEO}{\textnormal{PEO}}
\newcommand{\HBE}{\textnormal{HBE}}
\newcommand{\Zs}{\Z^\star}
\newcommand{\fs}{f^\star}
\newcommand{\eet}{\bm \eta}
\renewcommand{\sp}[1]{^{(#1)}}
\newcommand{\dsc}{\mathcal S}
\let\vec\relax
\DeclareMathOperator{\vec}{vec}
\DeclareMathOperator{\Toep}{Toep}
\newcommand{\FFT}{\bm{\mathcal F}}
\newcommand{\vv}{\bm v}
\newcommand{\email}[1]{\href{mailto:#1}{\texttt{#1}}}
\title{Measurement Error Correction in Particle Tracking Microrheology}
\author{
  Yun Ling \\
  Department of Statistics and Actuarial Science \\
  University of Waterloo \\
  \And
  Martin Lysy\thanks{Corresponding author: \email{mlysy@uwaterloo.ca}.} \\
  Department of Statistics and Actuarial Science \\
  University of Waterloo \\
  \And
  Ian Seim \\
  Department of Applied Physical Sciences \\
  University of Carolina Chapel Hill \\
  \And
  Jay M. Newby \\
  Department of Mathematical and Statistical Sciences \\
  University of Alberta \\
  \And
  David B. Hill \\
  Marsico Lung Institute \\
  Department of Physics and Astronomy \\
  University of Carolina Chapel Hill \\
  \And
  Jeremy Cribb \\
  Department of Physics and Astronomy \\
  University of Carolina Chapel Hill \\
  \And
  M. Gregory Forest \\
  Department of Applied Physical Sciences \\
  Department of Biomedical Engineering \\
}
\date{November 14, 2019}
\begin{document}
\maketitle

\begin{abstract}
  In diverse biological applications, particle tracking of passive microscopic species has become the experimental measurement of choice -- when either the materials are of limited volume, or so soft as to deform uncontrollably when manipulated by traditional instruments.
  In a wide range of
  particle tracking experiments, a ubiquitous finding is that the mean squared displacement (MSD) of particle positions exhibits a power-law signature, the parameters of which
  reveal valuable information about the
  viscous and elastic properties of various biomaterials.
  However, MSD measurements are typically contaminated by complex and interacting sources of instrumental noise.
  As these often affect the high-frequency bandwidth to which MSD estimates are
  particularly
  sensitive, inadequate error correction can lead to severe bias in power law estimation and thereby, the inferred viscoelastic properties.
  In this article, we propose a novel strategy to filter high-frequency noise from particle tracking measurements. 
  Our filters are shown theoretically to cover a broad spectrum of high-frequency noises,
  and lead to a parametric estimator of MSD power-law coefficients 
  for which an efficient computational implementation is presented.
  Based on numerous analyses of experimental and simulated data, results suggest our methods perform very well compared to other denoising procedures.
\end{abstract}

\keywords{Particle tracking \and Subdiffusion \and Measurement error \and High-frequency filtering}


\section{Introduction}

With the development of high-resolution microscopy,
single-particle tracking has emerged as an invaluable tool in the study of biophysical and transport properties of
diverse soft materials~\citep[e.g.,][]{mason.etal97}.
Examples of applications include cellular membrane dynamics~\citep{saxton.jacobson97}, drug delivery mechanisms~\citep{suh.etal05}, properties of colloidal particles~\citep{lee.etal07},
mechanisms of virus infection~\citep{vander.etal08}, microrheology of complex fluids and living cells~\citep{mason.etal97,wirtz09} and functional analyses of the cytoskeleton~\citep{gal.etal13}.

\emph{Passive} single-particle tracking refers to experiments in which microscale probes and/or pathogens (e.g., viruses) are recorded without external forcing, producing high-resolution time series of particle positions from which dynamical properties of the transport medium are inferred.  In many of these experiments, the resulting analysis hinges pivotally on the measurement of particles' mean square displacement (MSD), which for a $k$-dimensional particle trajectory $\X(t) = \bigl(X_1(t), \ldots, X_\nd(t)\bigl)$ (with $k \in \{1,2,3\}$ depending on the experiment) is given by
\begin{equation}\label{eq:msd}
  \MSD_{\X}(t) = \frac 1 \nd \times E\bigl[\lVert \X(t) - \X(0) \rVert^2\bigr] = \frac 1 \nd \times \sum_{j=1}^\nd E\bigl[\lvert X_j(t) - X_j(0) \rvert^2\bigr].
\end{equation}

For particles diffusing 
in viscous media (e.g., water, glycerol),
the position time series are accurately modeled by Brownian motion. The MSD is then linear in time,
\begin{equation}\label{eq:diff}
  \MSD_{\X}(t) = 2D t,
\end{equation}
and the diffusion coefficient $D$ is determined by the Stokes-Einstein relation~\citep{einstein56,edward70}
\begin{equation}\label{eq:stokes}
  D = \frac{\kB T}{6 \pi \eta r},
\end{equation}
where $r$ is the particle radius, $T$ is temperature, $\eta$ is the viscosity of the medium, and $k_B$ is the Boltzmann's constant.
However, due to the microstructure of large molecular weight biopolymers (e.g., mucins in mucosal layers), most biological fluids are \emph{viscoelastic}. In such fluids, a nearly ubiquitous experimental finding has been that the MSD has sublinear power-law scaling over a given range of timescales,
\begin{equation}\label{eq:subdiff}
  \MSD_{\X}(t) \sim 2 D t^\alpha, \qquad \tmin < t < \tmax, \quad 0 < \alpha < 1.
\end{equation}
This phenomenon is referred to as 
\emph{subdiffusion}. Due to its pervasiveness, interpretation of the subdiffusion parameters $\aD$ has far-reaching consequences for numerous biological applications, for example: distinguishing signatures of healthy versus pathological human bronchial epithelial mucus~\citep{hill.etal14}; cytoplasmic crowding~\citep{weiss.etal04}; local viscoelasticity in protein networks~\citep{amblard.etal96}; dynamics of telomeres in the nucleus of mammalian cells~\citep{bronstein.etal09}; and microstructure dynamics of entangled F-Actin networks~\citep{wong.etal04}.

Unlike viscous fluids exhibiting ordinary (linear) diffusion, the precise manner in which the properties of a viscoelastic fluid determine its subdiffusion parameters $\aD$ is unknown, 
such that $\aD$ must be estimated from particle-tracking data. To this end, a widely-used approach is to apply ordinary least-squares to
a nonparametric estimate of the MSD~\citep[e.g.,][]{qian.etal91}.
While minimal modeling assumptions suffice to make this subdiffusion estimator consistent~\citep{michalet10}, for finite-length trajectories, the nonparametric MSD estimator at longer timescales is severely biased~\citep{mellnik.etal16}.  Therefore, in practice a good portion of the MSD must be discarded,
at the expense of considerable loss in statistical efficiency.  In contrast, fully parametric subdiffusion estimators specify a complete stochastic process for $\X(t)$ as a function of $\aD$~\citep[e.g.,][]{berglund10, lysy.etal16, mellnik.etal16}, whereby optimal statistical efficiency is achieved via likelihood inference.  However, the accuracy of these parametric estimators critically depends on the adequacy of the parametric model, and particle tracking measurements are well known to be corrupted by various sources of experimental noise.

Noise in single-particle tracking experiments 
can be categorized roughly into two types.  Low-frequency noise, originating primarily from slow drift currents in the fluid itself, is typically removed from particle trajectories by way of various linear detrending methods~\citep[e.g.,][]{fong.etal13, rowlands.so13, koslover.etal16, mellnik.etal16}.  
In contrast, high-frequency noise can be due to a variety of reasons: mechanical vibrations of the instrumental setup;  particle displacement while the camera shutter is open; noisy estimation of true position from the pixelated microscopy image; error-prone tracking of particle positions when they are out of the camera focal plane.
A systematic review of high-frequency or \emph{localization} errors in single-particle tracking is given by~\cite{deschout.etal14}.  
The effect of 
such noise is to distort the MSD at the shortest observation timescales.  Since fully-parametric models extract far more information about $\aD$ from short timescales than long ones, their accuracy in the presence of high-frequency noise can suffer considerably.

In a seminal work,~\cite{savin.doyle05} present a theoretical model for localization error, encompassing most of the approaches reviewed by~\cite{deschout.etal14}.  The parameters of the Savin-Doyle model can be derived either from first-principles~\citep[for instance, by analyzing uncertainty in position-extraction algorithms, e.g.,][]{mortensen.etal10,chenouard.etal14,kowalczyk.etal14,burov.etal17}, or empirically~\citep[via signal-free control experiments, e.g.,][]{savin.doyle05,deschout.etal14}.  Model-based methods for estimating localization error have also been proposed, under the assumption of ordinary diffusion $\alpha = 1$~\cite[e.g.,][]{michalet10, berglund10,michalet.berglund12,vestergaard.etal14,ashley.andersson15,calderon16}.

The Savin-Doyle theoretical framework accounts for a wide range of experimental errors.  However, due to the extreme complexity and inter-dependence between various sources of localization error, the Savin-Doyle model cannot account for them all.  This is illustrated in the control experiment of Figure~\ref{fig:controla}, where trajectories of \SI{1}{\micro\meter} diameter tracer particles are recorded in water, for which it is known that $\alpha = 1$ and for which $D$ may be determined  theoretically by the Stokes-Einstein relation~\eqref{eq:stokes}.  However, the Savin-Doyle model estimates both of these parameters with considerable bias (Figure~\ref{fig:controlb}).
\begin{figure}[!htb]
  \centering
  \begin{subfigure}{\textwidth}
    \includegraphics[width = 1\textwidth]{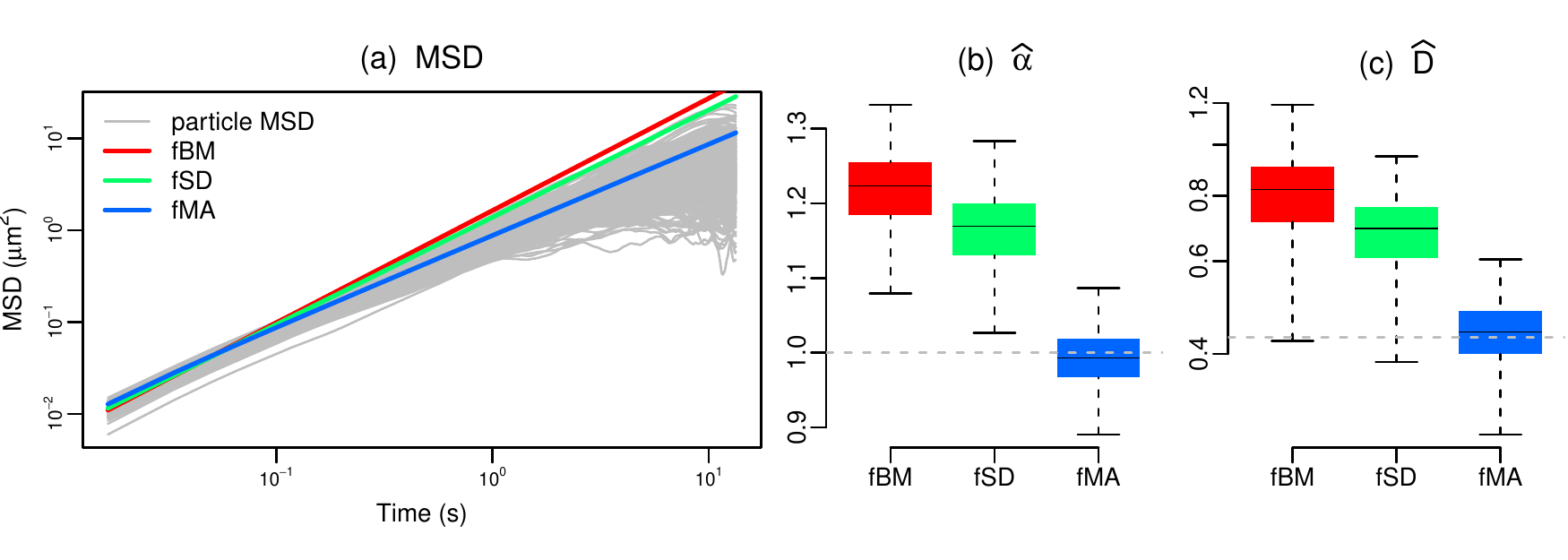}\phantomsubcaption\label{fig:controla}\phantomsubcaption\label{fig:controlb}\phantomsubcaption\label{fig:controlc}
  \end{subfigure}
  \caption{(a) Pathwise empirical MSD for $1931$ particles of diameter \SI{1}{\micro\meter} recorded in water at $\dt = \SI{1/60}{\second}$ for a duration of $\SI{30}{\second}$ ($N = 1800$ observations). Straight lines correspond to fitted MSDs for three parametric models: fractional Brownian Motion (fBM), fBM with Savin-Doyle noise correction (fSD), and fBM with one of the noise correction models proposed in this paper (fMA).
    (b-c) Estimated values of $\alpha$ and $D$ for each particle and parametric model. The predicted values from Stokes-Einstein theory are given by the horizontal dashed lines.}
  \label{fig:control}
\end{figure}

In this article, we propose a likelihood-based filtering method to correct for localization errors, complementing the theoretical Savin-Doyle approach.  Our filters can be readily applied to any parametric model of particle dynamics, and are demonstrated theoretically to cover a very broad spectrum of high-frequency noises.  
We show how to combine our filters with parametric methods of low-frequency drift correction,
and estimate all parameters of both subdiffusion and noise models in a computationally efficient manner. 
Extensive simulations and analyses of experimental data suggest that
our filters perform remarkably well, both for estimating the true values of $\aD$, and compared to state-of-the-art high-frequency
denoising procedures
(e.g., Figure~\ref{fig:controlc}).

The remainder of the article is organized as follows.  In Section~\ref{sec:background} we review a number of existing subdiffusion estimators and high-frequency error-correction techniques.  In Section~\ref{sec:filter} we present our family of high-frequency filters, with theoretical justification for the proposed construction.  Sections~\ref{sec:sim} and~\ref{sec:exper} contain analyses of numerous simulated and real particle-tracking experiments 
comparing our proposed subdiffusion estimators to existing alternatives.  Section~\ref{sec:disc} offers concluding remarks and directions for future work. 


\section{Existing Subdiffusion Estimators}\label{sec:background}

\subsection{Semiparametric Least-Squares Estimator}

Let $\X = (\rv [0] \X N)$, $\X_n = \X(n\cdot \dt)$, denote the discrete-time observations of a given particle trajectory $\X(t)$ recorded at frequency $1/\dt$. 
Assuming that 
$\X(t)$ has second-order stationary increments,
\begin{equation}\label{eq:statincr}
  E\bigl[\lVert \X(s+t) - \X(s) \rVert^2\bigr] = E\bigl[\lVert \X(t) - \X(0) \rVert^2\bigr],
\end{equation}
a standard nonparametric estimator for the particle MSD is given by
\begin{equation}\label{eq:msdemp}
  \widehat \MSD_{\X}(n \cdot \dt) = \frac{1}{\nd\cdot(N-n+1)} \sum_{i=0}^{N-n} \lVert \X_{n+i} - \X_i \rVert^2.
\end{equation}
Based on the linear relation
\begin{equation}\label{eq:logreg}
  \log \MSD_{\X}(t) = \log 2D + \alpha \log t
\end{equation}
over the subdiffusion timescale $t \in (\tmin, \tmax)$, a commonly-used subdiffusion estimator~\citep[e.g.,][]{gal.etal13} is obtained from the  least-squares regression of $y_n = \log \bigl(\widehat \MSD_{\X}(n\cdot \dt)\bigr)$ onto $x_n = \log(n\cdot \dt)$, namely
\begin{equation}\label{eq:ols}
  \hat \alpha = \frac{\sum_{n=0}^N(y_n - \bar y)(x_n - \bar x)}{\sum_{n=0}^N(x_n - \bar x)^2}, \qquad \hat D = \tfrac 1 2 \exp(\bar y - \hat \alpha \bar x).
\end{equation}

The least-squares subdiffusion estimator~\eqref{eq:ols} is easy to implement, and it is consistent under the minimal assumption of~\eqref{eq:statincr} and when the power-law scaling~\eqref{eq:subdiff} holds for all $t > \tmin$~\citep{sikora.etal17}.
However, the least-squares estimator also presents two major drawbacks.  First, the errors underlying the regression~\eqref{eq:logreg} are neither homoscedastic nor uncorrelated~\citep{sikora.etal17}, such that~\eqref{eq:ols} is statistically inefficient.  Second, it is common practice to account for low-frequency noise by calculating the empirical MSD~\eqref{eq:msdemp} from the drift-subtracted positions
\begin{equation}\label{eq:dsp}
  \tilde \X_n = (\X_n - \X_0) - n \cdot \widebar{\dX},
\end{equation}
where $\widebar{\dX} = \frac 1 N \sum_{n=1}^N (\X_n - \X_{n-1})$ is the average displacement over the interobservation time $\dt$.  However, a straightforward calculation~\citep{mellnik.etal16} shows that $\tilde \X_N = 0$, such that $\widehat{\MSD}_X(n \cdot \dt)$ becomes increasingly biased towards zero as $n$ approaches $N$.  Consequently, a widely-reported figure~\citep[e.g.,][]{weihs.etal07} suggests that, prior to fitting~\eqref{eq:ols}, the largest 30\% of MSD lag times are discarded, thus severely compounding the inefficiency of the least-squares subdiffusion estimator when low-frequency noise correction is applied.

\subsection{Fully-Parametric Subdiffusion Estimators}\label{sec:parest}

While the semiparametric estimator~\eqref{eq:ols} operates under minimal modeling assumptions, complete specification of the stochastic process $\X(t)$ provides not only a considerable increase in statistical efficiency~\citep[e.g.,][]{mellnik.etal16}, but in fact is necessary to establish dynamical properties of particle-fluid interactions 
which cannot be determined from second-order moments (such as the MSD) alone~\citep{gal.etal13,lysy.etal16}.  A convenient framework for stochastic subdiffusion modeling is the location-scale model of~\cite{lysy.etal16},
\begin{equation}\label{eq:lsmodel}
  \X(t) = \sum_{j=1}^\np \bbe_j f_j(t) + \SSi^{1/2} \Z(t),
\end{equation}
where $f_1(t), \ldots f_\np(t)$ are known functions accounting for low-frequency drift (typically linear, $f_1(t) = t$, and occasionally quadratic, $f_2(t) = t^2$), $\rv \bbe \np \in \mathbb R^\nd$ are regression coefficients, $\SSi_{\nd \times \nd}$ is a variance matrix, and $\Z(t) = \bigl(Z_1(t), \ldots, Z_\nd(t)\bigr)$ are iid continuous stationary-increments (CSI) Gaussian processes with mean zero and MSD parametrized by $\pph$,
\begin{equation}\label{eq:csimsd}
  \MSD_Z(t) = E\bigl[\lVert Z_j(t) - Z_j(0) \rVert^2\bigr] = \eta(t \mid \pph),
\end{equation}
such that the MSD of the drift-subtracted process $\tilde \X(t) = \X(t) - \sum_{j=1}^\np \bbe_j f_j(t)$ is given by
\begin{equation}\label{eq:csimsddrift}
  \MSD_{\tilde \X}(t) = \tfrac{1}{\nd} \tr(\SSi) \cdot \eta(t \mid \pph).
\end{equation}

Perhaps the simplest parametric subdiffusion model sets $Z_j(t) = B_\alpha(t)$ to be fractional Brownian Motion (fBM)~\citep[e.g.,][]{szymanski.weiss09,weiss13}, a mean-zero CSI Gaussian process with covariance function
\begin{equation}\label{eq:fbm-cov}
  \cov\bigl(B_\alpha(t), B_\alpha(s)\bigr) = \tfrac 1 2 (|t|^\alpha + |s|^\alpha - |t-s|^\alpha), \qquad 0 < \alpha < 2.
\end{equation}
Indeed, as the covariance function of a CSI process is completely determined by its MSD, fBM is the only (mean-zero) CSI Gaussian process exhibiting \emph{uniform} subdiffusion,
\begin{equation}\label{eq:fbm-msd}
  \MSD_{B_\alpha}(t) = t^\alpha, \qquad 0 < t < \infty,
\end{equation}
in which case the diffusivity coefficient is given by
\[
  D = \frac{1}{2k} \times \tr(\SSi).
\]
Other examples of driving CSI processes are the confined diffusion model of~\cite{ernst.etal17} and the viscoelastic Generalized Langevin Equation (GLE) of~\cite{mckinley.etal09}, both of which exhibit \emph{transient}
subdiffusion, i.e., power-law scaling only on a given timescale $t \in (\tmin, \tmax)$. In this case, the subdiffusion parameters $\aD$ become functions of the other parameters, namely $\alpha = \alpha(\pph)$ and $D = D(\pph, \SSi)$.  We shall revisit these transient subdiffusion models in Section~\ref{sec:sim}.

Parameter estimation for the location-scale model~\eqref{eq:lsmodel} can be done by maximum likelihood.  Let $\dX_n = \X_{n+1} - \X_n$ denote the $n$th trajectory increment, and $\dX = (\rv [0] \dX {N-1})$.  Then $\dX$ are consecutive observations of a stationary Gaussian time series with autocorrelation function
\[
  \acf_{\dX}(h) = \cov(\dX_n, \dX_{n+h}) = \SSi \times \gamma(h \mid \pph),
\]
where
\[
  \gamma(n \mid \pph) = \tfrac 1 2 \times \Bigl\{\eta(|n-1| \cdot \dt \mid \pph) + \eta(|n+1| \cdot \dt \mid \pph) - 2 \eta(|n|\cdot \dt \mid \pph)\Bigr\},
\]
such that the increments follow a matrix-normal distribution (defined in Appendix~\ref{appendix:profile}),
\begin{equation}\label{eq:matnorm}
  \dX_{N\times k} \sim \MN(\F\bbe, \VV_\pph, \SSi),
\end{equation}
where $\bbe_{\np \times \nd} = [\bbe_1 \mid \cdots \mid \bbe_\np]'$, $\F_{N \times \np}$ is a matrix with elements $F_{nm} = f_m((n+1)\cdot \dt) - f_m(n\cdot \dt)$, 
and $\VV_\pph$ is an $N \times N$ Toeplitz matrix with element $(n,m)$ given by $V_\pph^{(n,m)} = \gamma(n-m \mid \pph)$, such that the log-likelihood function is given by
\begin{equation}\label{eq:lsloglik}
  \begin{split}
    \ell(\pph, \bbe, \SSi \mid \dX) = & -\frac 1 2 \tr\left\{\SSi^{-1}(\dX - \F \bbe)'\VV_{\pph}^{-1}(\dX - \F\bbe)\right\} \\
    & - \frac N 2 \log |\SSi| - \frac \nd 2 \log |\VV_{\pph}|.
  \end{split}
\end{equation}

In order to calculate the MLE of $\tth = (\pph, \bbe, \SSi)$, model~\eqref{eq:lsmodel} has two appealing properties.  First, for given $\pph$, the conditional MLEs of $\bbe$ and $\SSi$ can be obtained analytically as shown in Appendix~\ref{appendix:profile}, such that the optimization problem can be reduced by $2\nd + {\nd\choose 2}$ dimensions by calculating the profile likelihood $\elp(\pph \mid \dX) = \max_{\bbe, \SSi} \ell(\pph, \bbe, \SSi \mid \dX)$.  Second, we show in Appendix~\ref{appendix:profile} that the computational bottleneck in $\elp(\pph \mid \dX)$ involves the calculation of $\VV_{\pph}^{-1}$ and its log-determinant.  While the computational cost of these operations is $\bigO(N^3)$ for general variance matrices, for Toeplitz matrices it is only $\bigO(N^2)$ using the Durbin-Levinson algorithm~\citep{levinson47,durbin60}, or more recently, only $\bigO(N \log^2N)$ using the Generalized Schur algorithm~\citep{kailath.etal79,ammar.gragg88,ling.lysy17}.

\subsection{Savin-Doyle Noise Model}\label{sec:sdmodel}

In order to characterize high-frequency noise in particle tracking experiments,~\cite{savin.doyle05} decompose it into so-called \emph{static} and \emph{dynamic} sources.  Static noise is due to measurement error in the recording of the position of the particle at a given time.  Thus, if $\X_n$ denotes the true particle position at time $t = n \cdot\dt$, and $\Y_n$ is its recorded value, then~\citeauthor{savin.doyle05} suggest the additive error model
\begin{equation}\label{eq:staterr}
  \Y_n = \X_n + \eps_n,
\end{equation}
where $\eps_n$ is a $\nd$-dimensional stationary process independent of $\X(t)$.  Thus, if the autocorrelation of the static noise is denoted as
\begin{equation}\label{eq:acf}
  \acf_{\eps}(n) = \cov(\eps_m, \eps_{m+n}),
\end{equation}
the MSD of the observations becomes
\begin{equation}
  \begin{aligned}
    \MSD_{\Y}(n) & = \tfrac 1 \nd \times E\bigl[\lVert \Y_n - \Y_0 \rVert^2\bigr] \\
    & = \MSD_{\X}(n) + \tfrac 1 \nd \times 2\cdot \tr\bigl(\acf_{\eps}(0) - \acf_{\eps}(n)\bigr).
  \end{aligned}
\end{equation}
\citeauthor{savin.doyle05} describe how to estimate the temporal dynamics of $\eps_n$ by recording immobilized particles, i.e., for which it is known that $\X_n \equiv 0$.  Over a wide range of signal-to-noise ratios, they report that $\eps_n$ is effectively white noise,
\[
  \acf_{\eps}(n) = \SSi_{\varepsilon} \cdot \id(n = 0),
\]
a result corroborated by many other experiments~\citep[for example, see references in][Figure 2]{deschout.etal14}. For the canonical trajectory model of fractional Brownian motion, $\MSD_{\X}(t) = 2D t^\alpha$, white static noise has the effect of raising the MSD at the shortest timescales, as seen in Figure~\ref{fig:locerrb}.
\begin{figure}[!htb]
  \centering
  \begin{subfigure}{\textwidth}
    \includegraphics[width = 1\textwidth]{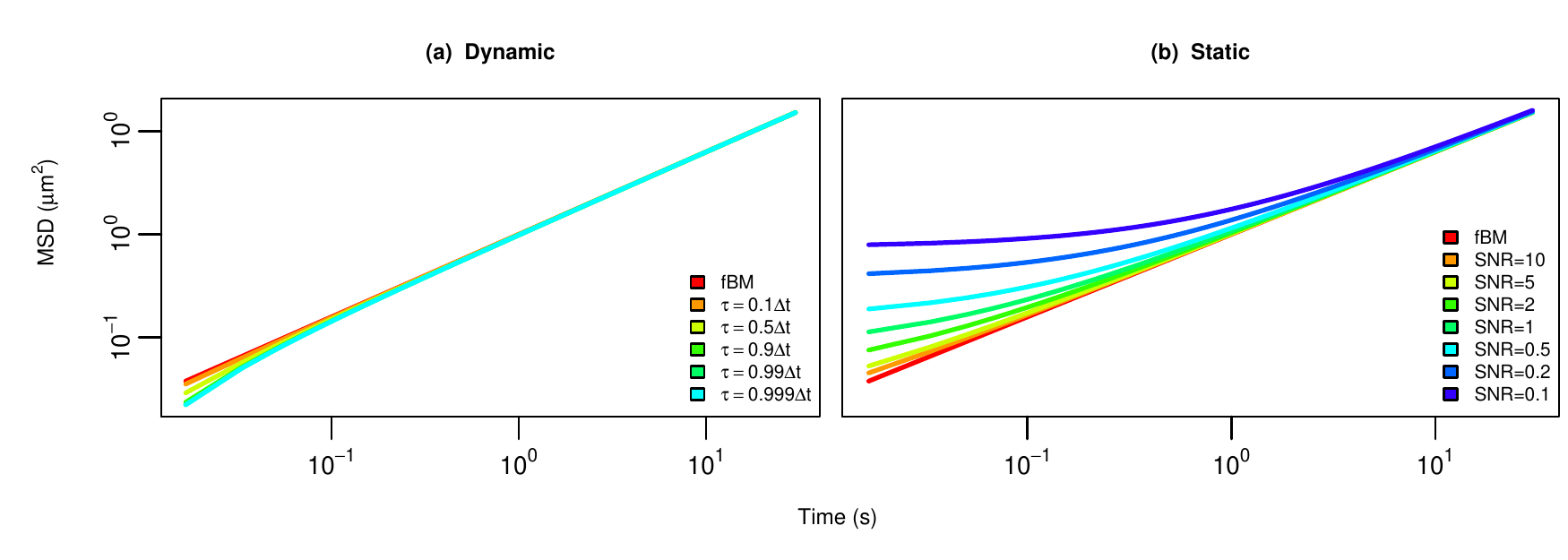}
    \phantomsubcaption\label{fig:locerra}
    \phantomsubcaption\label{fig:locerrb}
  \end{subfigure}
  \caption[MSD of localization errors]{Effect of localization error on the MSD of an fBM process $X(t) = \fbm_t$ with $\alpha = 0.8$ and $\dt = 1/60$.
    (a) Dynamic error, as a function of exposure time $\tau$.
    (b) Static error, as a function of the signal-to-noise ratio, $\SNR = \var(\Delta \fbm_n) / \var(\varepsilon_n)$.}
  \label{fig:locerr}
\end{figure}

In contrast to static noise,~\citeauthor{savin.doyle05} define dynamic noise as originating from movement of the particle during the camera frame exposure time.  Thus, if the camera exposure time is $\tau < \dt$ (as it must be less than the framerate), the recorded position of the particle at time $t = n\cdot\dt$ is
\begin{equation}\label{eq:dynerr}
  \Y_n = \frac 1 \tau \int_0^\tau \X(n\cdot\dt - s) \ud s.
\end{equation}
The dynamic-error MSD for an fBM process $X(t) = \fbm_t$ is given in Appendix~\ref{appendix:fdl-acf}.  Larger values of $\tau$ have the effect of lowering the MSD at the shortest timescales, as seen in Figure~\ref{fig:locerra}.

Combining static and dynamic models, the Savin-Doyle localization error model is
\begin{equation}\label{eq:locmod}
  \Y_n = \frac 1 \tau \int_0^\tau \X(n\cdot\dt - s) \ud s + \eps_n.
\end{equation}
When $\X(t) = \sum_{m=1}^\np \bbe_m f_m(t) + \SSi^{1/2} \Z(t)$ follows the location-scale model~\eqref{eq:lsmodel}, and the static noise has the simplified form $\SSi_{\varepsilon} = \sigma^2 \cdot \SSi$, parametric inference can be conducted using the computationally efficient methods of Section~\ref{sec:parest}.  Explicit calculations for the fBM process with $\MSD_Z(t) = t^\alpha$ are given in Appendix~\ref{appendix:fdl-acf}.

Thus, the fBM + Savin-Doyle (fSD) model has three MSD parameters: $\pph = (\alpha, \tau, \sigma)$. 
Its maximum likelihood estimates of the subdiffusion parameters $\aD$ are $\hat \alpha$ and $\hat D = (1/2\nd) \cdot \tr(\hat \SSi)$. While these estimates successfully correct for many types of high-frequency measurement errors, the fSD model has two important limitations.  First, Figure~\ref{fig:locerra} shows that the Savin-Doyle model has little ability to correct negatively biased MSDs at the shortest timescales.  Indeed, the camera aperture time $\tau$ is typically at least an order of magnitude smaller than $\dt$, in which case the effect of the dynamic error in Figure~\ref{fig:locerra} is extremely small, and insufficient to explain larger negative MSD biases as in Figure~\ref{fig:controla}.  Second, the Savin-Doyle model uses one parameter ($\tau$) to lower 
the MSD, and a different parameter ($\sigma$) to raise it.  This leads to an identifiability issue which adversely affects the subdiffusion estimator, as we shall see in Section~\ref{sec:sim}.  Complementing the theoretically derived Savin-Doyle approach, we present a general high-frequency noise filtering framework in the following section.  


\section{Proposed Method}\label{sec:filter}

In order to formulate our proposed method of filtering the localization errors in single particle tracking experiments, we begin with the following definition of high frequency noise.  Let us first focus on a
one-dimensional zero-drift CSI process $X(t)$ with $E[X(t)] = 0$,
and let $\Xts = \{X_n: n \ge 0\}$ and $\Yts = \{Y_n: n \ge 0\}$ denote the true and recorded particle position process at times $t = n \cdot \dt$. 
Then we shall say that the observation process $\Yts$ contains only high frequency noise if the low-frequency second-order dynamics of the true and recorded particle positions are the same, namely
\begin{equation}\label{hypo:msd}
  \lim_{n\rightarrow\infty} \frac{\MSD_{Y}(n)}{\MSD_{X}(n)} = 1.
\end{equation}
Given the true position process $\Xts$, our noise model sets the observed position process to be of autoregressive/moving-average $\armapq$ type:
\begin{equation}\label{eq:farma}
  Y_n = \sum_{i=1}^p \theta_i Y_{n-i} + \sum_{j=0}^q \rho_j X_{n-j}, \qquad n \ge r = \max\{p,q\}.  
\end{equation}

For $0 \le n < r$, $Y_n$ is defined via the stationary increment process $\Delta \Xts = \{\Delta X_n: n \in \mathbb Z\}$.  That is, with the usual parameter restrictions
\begin{equation}\label{eq:stat}
  \min_{\{z \in \mathbb C: |z| \le 1\}} {\textstyle \big\vert 1-\sum_{i=1}^p \theta_i z^i\big\vert} > 0, \qquad \min_{\{z \in \mathbb C: |z| \le 1\}} {\textstyle \big\vert \rho_0 - \sum_{j=1}^q \rho_j z^j\big\vert} > 0,
\end{equation}
\citep[e.g.][]{brockwell.davis91}, the increment process $\Delta \Yts = \{\Delta Y_n: n \in \mathbb Z\}$ defined by
\begin{equation}\label{eq:armastat}
  \Delta Y_n = \sum_{i=1}^p \theta_i \Delta Y_{n-i} + \sum_{j=0}^q \rho_j \Delta X_{n-j}
\end{equation}

is a well-defined stationary process which can be causally derived from $\Delta \Xts$, and vice-versa.  Moreover, setting $Y_n = \sum_{i=0}^{n-1} \Delta Y_i$ obtains the ARMA relation~\eqref{eq:farma}  on the position scale for $n \ge r$.

One may note in model~\eqref{eq:farma} that $\rrh = (\rv [0] \rho q)$ and $\var(\Delta X_n)$ cannot be identified simultaneously.  This issue is typically resolved in the time-series literature by imposing the restriction $\rho_0 = 1$.  However, in order for the recorded positions to adhere to a high-frequency error model as defined by~\eqref{hypo:msd}, a different restriction must be imposed:
\begin{theorem}\label{thm:restriction}
  Let $\Xts$ and $\Yts$ denote the true and recorded position processes, with the latter defined by an $\armapq$ representation of the former as in~\eqref{eq:armastat}.  Then $\Yts$ is a high-frequency error model for $\Xts$ as defined by~\eqref{hypo:msd} if and only if
  \begin{equation}\label{eq:armarestr}
    \rho_0 = 1 - \sum_{i=1}^p \theta_i - \sum_{j=1}^q \rho_j.
  \end{equation}
\end{theorem}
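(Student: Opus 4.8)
The plan is to recast the increment recursion~\eqref{eq:armastat} as the action of a single rational \emph{transfer function} on the true increments, and then to show that only its value at $z=1$ --- its ``DC gain'' --- controls the limit in~\eqref{hypo:msd}. Let $L$ denote the backshift operator, $L\,\ddX_n=\ddX_{n-1}$. Then~\eqref{eq:armastat} reads $\theta(L)\,\ddY_n=\rho(L)\,\ddX_n$ with $\theta(z)=1-\sum_{i=1}^p\theta_iz^i$ and $\rho(z)=\sum_{j=0}^q\rho_jz^j$, so $\ddY_n=\psi(L)\,\ddX_n$ where $\psi(z)=\rho(z)/\theta(z)=\sum_{k\ge0}\psi_kz^k$. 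The stationarity/invertibility restrictions~\eqref{eq:stat} place all zeros of $\theta$ (and of the moving-average polynomial) strictly outside the closed unit disk, so $\psi$ is analytic on a neighbourhood of $\{|z|\le1\}$; hence its coefficients decay geometrically, $\sum_k k|\psi_k|<\infty$, and $\psi(1)=\rho(1)/\theta(1)$ is well defined. (Since $\theta$ is a real polynomial with $\theta(0)=1>0$ and no zeros in $[0,1]$, in fact $\theta(1)>0$.)

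The crux is a Beveridge--Nelson--type summation-by-parts identity. Because $\sum_k k|\psi_k|<\infty$, I can write $\psi(z)=\psi(1)+(1-z)\chi(z)$ with $\chi(z)=\sum_{k\ge0}\chi_kz^k$, $\chi_k=-\sum_{\ell>k}\psi_\ell$, and $\sum_k|\chi_k|\le\sum_k k|\psi_k|<\infty$. Applying this operator identity to $\ddX_n$ gives
\begin{equation}\label{eq:incr-decomp}
  \ddY_n=\psi(1)\,\ddX_n+\bigl(Z_n-Z_{n-1}\bigr),\qquad Z_n:=\chi(L)\,\ddX_n=\sum_{k\ge0}\chi_k\,\ddX_{n-k},
\end{equation}
where $Z_n$ is a well-defined, zero-mean, stationary process with $\var(Z_n)\le\bigl(\sum_k|\chi_k|\bigr)^2\var(\ddX_0)<\infty$. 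Applying~\eqref{eq:incr-decomp} at index $i$ and summing over $0\le i\le n-1$, using $Y_n=\sum_{i=0}^{n-1}\ddY_i$ and $X_n-X_0=\sum_{i=0}^{n-1}\ddX_i$, yields the exact representation
\begin{equation}\label{eq:pos-decomp}
  Y_n-Y_0=\psi(1)\,(X_n-X_0)+R_n,\qquad R_n:=Z_{n-1}-Z_{-1},
\end{equation}
and by stationarity $\var(R_n)\le4\var(Z_0)=:C$ uniformly in $n$: a high-frequency filter displaces the recorded position from the true position only by a remainder whose variance does not grow with $n$.

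From~\eqref{eq:pos-decomp}, $\MSD_Y(n)=\psi(1)^2\MSD_X(n)+2\psi(1)\cov(X_n-X_0,R_n)+\var(R_n)$, and Cauchy--Schwarz gives $|\cov(X_n-X_0,R_n)|\le\sd(X_n-X_0)\,\sd(R_n)\le\sqrt{C\,\MSD_X(n)}$. Dividing by $\MSD_X(n)$ and letting $n\to\infty$ --- using $\MSD_X(n)\to\infty$, which holds for all the subdiffusion models considered (e.g.\ $\MSD_X(n)=(n\dt)^\alpha$ for fBM) --- the cross term is $\bigO(\MSD_X(n)^{-1/2})$ and the last term is $\bigO(\MSD_X(n)^{-1})$, so $\MSD_Y(n)/\MSD_X(n)\to\psi(1)^2$. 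Hence~\eqref{hypo:msd} holds iff $\psi(1)^2=1$. Since $\psi(1)=\rho(1)/\theta(1)=\bigl(\rho_0+\sum_{j=1}^q\rho_j\bigr)/\bigl(1-\sum_{i=1}^p\theta_i\bigr)$ with $\theta(1)>0$, this is $\rho(1)=\pm\theta(1)$; the $+$ branch is exactly~\eqref{eq:armarestr}, while the $-$ branch (which forces $\rho(1)<0$ and corresponds to a filter that asymptotically reverses the sign of long displacements) is excluded by the normalization that the filter reduce to the identity, $\rho_0=1$, when $\theta_1=\dots=\theta_p=\rho_1=\dots=\rho_q=0$.

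I expect the uniform-in-$n$ bound $\var(R_n)\le C$ to be the one genuinely nontrivial estimate --- it is precisely what the Beveridge--Nelson split buys, and it is the reason the limit depends on $\psi$ only through $\psi(1)$. (A spectral-domain variant gives the same conclusion: express $\MSD_X(n)$ and $\MSD_Y(n)$ as integrals of $|1-e^{in\omega}|^2/|1-e^{i\omega}|^2$ against $f_{\ddX}(\omega)$ and $|\psi(e^{i\omega})|^2f_{\ddX}(\omega)$, and observe that $\bigl(|\psi(e^{i\omega})|^2-\psi(1)^2\bigr)/|1-e^{i\omega}|^2$ is bounded on $[-\pi,\pi]$, so $|\MSD_Y(n)-\psi(1)^2\MSD_X(n)|$ stays bounded.) The modelling hypothesis worth flagging is $\MSD_X(n)\to\infty$: if the true MSD instead saturated to a finite plateau, the limit in~\eqref{hypo:msd} would depend on $\psi$ across the whole unit circle and the clean characterization~\eqref{eq:armarestr} would no longer hold.
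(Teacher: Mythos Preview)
Your proof is correct and takes a genuinely different route from the paper's. The paper works in the spectral domain: it first proves a lemma (via the Fej\'er-kernel representation $\MSD_X(n)=n\int_{-\pi}^{\pi}F_n(\omega)S_{\ddX}(\omega)\,\ud\omega$ and Fej\'er summability) that $\lim_{\omega\to0}S_{\ddX}(\omega)/S_{\ddY}(\omega)=1$ implies $\lim_{n\to\infty}\MSD_X(n)/\MSD_Y(n)=1$, and then reads off $\lim_{\omega\to0}S_{\ddX}(\omega)/S_{\ddY}(\omega)=\lvert\theta(1)/\rho(1)\rvert^2$ for the ARMA transfer function. You instead stay entirely in the time domain, using the Beveridge--Nelson split $\psi(L)=\psi(1)+(1-L)\chi(L)$ to obtain the exact identity $Y_n-Y_0=\psi(1)(X_n-X_0)+R_n$ with $\var(R_n)$ bounded uniformly in $n$; Cauchy--Schwarz and $\MSD_X(n)\to\infty$ then give $\MSD_Y(n)/\MSD_X(n)\to\psi(1)^2$. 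Your argument is more elementary (no Fourier analysis), makes the required growth hypothesis $\MSD_X(n)\to\infty$ explicit, and --- because it produces the actual limit $\psi(1)^2$ rather than only an implication --- delivers the ``only if'' direction directly, whereas the paper's lemma is stated one-directionally. The paper's spectral route, in turn, dovetails with the PSD machinery used for Theorem~\ref{thm:hfapprox}; your parenthetical spectral variant is essentially its argument in compressed form. Your handling of the sign branch $\psi(1)=-1$ (excluded by the normalization that the trivial filter be the identity) is also slightly more careful than the paper, which does not comment on the $\pm$ arising from $\lvert\cdot\rvert^2$.
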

The proof is given in Appendix~\ref{appendix:thm1}.  Indeed, the following result (proved in Appendix~\ref{appendix:thm2}) shows that the family of $\armapq$ noise models~\eqref{eq:farma} is sufficient to describe any high-frequency noise model to arbitrary accuracy:
\begin{theorem}\label{thm:hfapprox}
  Let $\Yts$ be a stochastic process of recorded positions defined as a high-frequency noise model via~\eqref{hypo:msd}.  If $\Yts$ satisfies the assumptions in Appendix~\ref{appendix:thm2}, then for any $\epsilon > 0$ we may find an $\armapq$ noise model $\bm{\mathcal{Y}^\star} = \{Y^\star_n: n \ge 0\}$ satisfying~\eqref{eq:farma} such that for all $n \ge 0$ we have
  \begin{equation}
    \left\vert\frac{ \MSD_{Y^\star}(n)}{\MSD_{Y}(n)  } -1 \right\vert < \epsilon.	
  \end{equation}
\end{theorem}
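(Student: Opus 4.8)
The plan is to reformulate the approximation problem in the frequency domain. Since $\Yts$ is a high-frequency noise model, by Theorem~\ref{thm:restriction} (or rather its analytic content), the increment process $\dY$ is a stationary process whose spectral density $g_Y(\omega)$ agrees with that of $\dX$ in its low-frequency behaviour in the precise sense that the \emph{ratio} $g_Y(\omega)/g_X(\omega)$ tends to $1$ as $\omega \to 0$ — this is the spectral translation of~\eqref{hypo:msd}, since the MSD at large $n$ is controlled by the spectral mass near the origin (an Abelian/Tauberian argument). Define the ``transfer function'' $\psi(\omega)$ by $g_Y(\omega) = |\psi(e^{i\omega})|^2 g_X(\omega)$; the high-frequency condition forces $\psi(1) = 1$ (consistent with~\eqref{eq:armarestr}), and the regularity assumptions in Appendix~\ref{appendix:thm2} should guarantee that $\psi$ is continuous and nonvanishing on the unit circle. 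Then an $\armapq$ noise model with coefficients $\tth,\rrh$ corresponds exactly to a rational transfer function $\psi^\star(z) = \bigl(\rho_0 - \sum_{j=1}^q \rho_j z^j\bigr)\big/\bigl(1 - \sum_{i=1}^p \theta_i z^i\bigr)$ subject to the stability constraints~\eqref{eq:stat} and the normalization $\psi^\star(1) = 1$, which is precisely~\eqref{eq:armarestr}.

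The key steps, in order, are: (i) pass from~\eqref{hypo:msd} to the spectral statement $g_Y(\omega)/g_X(\omega) \to 1$ as $\omega \to 0$, using the representation $\MSD_Y(n) = \frac{1}{\pi}\int_{-\pi}^{\pi} \frac{1-\cos(n\omega)}{1-\cos\omega}\, g_{\dY}(\omega)\,\ud\omega$ (and similarly for $X$) together with a Tauberian theorem; (ii) form the ratio $\phi(\omega) := g_Y(\omega)/g_X(\omega)$, which by the assumptions is continuous and positive on $[-\pi,\pi]$ with $\phi(0) = 1$; (iii) invoke a uniform rational approximation result — Fejér–Riesz factorization plus the classical fact that any continuous positive function on the circle can be uniformly approximated by the squared modulus of a rational function with stable numerator and denominator — to obtain $|\psi^\star(e^{i\omega})|^2$ with $\sup_\omega \bigl| |\psi^\star(e^{i\omega})|^2/\phi(\omega) - 1\bigr| < \delta$; (iv) rescale $\psi^\star$ by the constant $1/\psi^\star(1)$ so that the normalization $\psi^\star(1)=1$ holds exactly, noting this changes the approximation only by a factor $1+O(\delta)$ since $\phi(0)=1$ and $\psi^\star(1)^2 \approx \phi(0) = 1$; and (v) translate back to the MSD: because $\MSD_{Y^\star}(n)/\MSD_Y(n)$ is a weighted average of $|\psi^\star(e^{i\omega})|^2 g_X(\omega)/g_Y(\omega) = |\psi^\star(e^{i\omega})|^2/\phi(\omega)$ against the nonnegative kernel $\frac{1-\cos(n\omega)}{1-\cos\omega} g_{\dY}(\omega)$, a uniform bound on the integrand transfers directly to a uniform-in-$n$ bound on the ratio, giving $\bigl|\MSD_{Y^\star}(n)/\MSD_Y(n) - 1\bigr| < \epsilon$ for $\delta$ small enough.

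The main obstacle will be step (iii): getting a \emph{uniform} (sup-norm on the circle) rational approximation to $\phi$ whose factors satisfy the stability constraints~\eqref{eq:stat}, as opposed to merely an $L^2$ or pointwise approximation. The standard route is Weierstrass/Fejér: trigonometric polynomials are dense in $C(\mathbb{T})$, so approximate $\log\phi$ (legitimate since $\phi$ is continuous and bounded away from $0$) uniformly by a trigonometric polynomial, exponentiate, and apply Fejér–Riesz to write the resulting positive trigonometric polynomial as $|P(e^{i\omega})|^2$ for a polynomial $P$ with no roots inside the closed unit disk; to also approximate the denominator one either truncates a power-series expansion of $1/\phi$ or works with the rational (ARMA rather than MA) class directly. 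Care is needed that the Fejér–Riesz factor inherits the strict root condition $|z|\le 1$ rather than $|z|<1$, but a small perturbation of the coefficients fixes this while preserving the sup-norm estimate. A secondary technical point is making precise which ``assumptions in Appendix~\ref{appendix:thm2}'' are invoked — presumably that $\dX$ and $\dY$ have spectral densities that are continuous and strictly positive on $[-\pi,\pi]$, which is exactly what makes the ratio $\phi$ well-behaved and the Tauberian passage in step (i) valid — and I would state the reduction to these hypotheses explicitly at the outset.
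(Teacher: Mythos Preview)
Your approach is essentially the same spectral route the paper takes: represent $\MSD(n)$ as an integral of the increment PSD against the Fej\'er kernel (your $(1-\cos n\omega)/(1-\cos\omega)$ is $nF_n(\omega)$), form the ratio $g(\omega)=S_{\ddY}(\omega)/S_{\ddX}(\omega)$, approximate it uniformly by the squared modulus of a polynomial, and read off the MSD bound.  Two points where the paper is more economical are worth noting.

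First, steps (i) and (iv) are unnecessary.  The theorem only asks that $\Yts^\star$ satisfy the $\armapq$ form~\eqref{eq:farma}, not the high-frequency restriction~\eqref{eq:armarestr}; so there is no need to enforce $\psi^\star(1)=1$, and hence no need to know that $\phi(0)=1$.  The paper uses hypothesis~\eqref{hypo:msd} only to conclude that $\MSD_X(n)/\MSD_Y(n)$ is bounded (a convergent sequence is bounded), which is all that is required to close the argument.  This sidesteps your Tauberian step entirely --- and that step is the delicate direction (MSD asymptotics $\Rightarrow$ pointwise spectral value), which would need extra regularity to justify.

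Second, for step (iii) the paper simply invokes Corollary~4.4.2 of Brockwell--Davis: since $g(\omega)$ is continuous, nonnegative, and symmetric on $[-\pi,\pi]$ (this \emph{is} the assumption in Appendix~\ref{appendix:thm2}), it is itself a spectral density and can be uniformly approximated by an $\ma(q)$ spectral density $|\sum_{j=0}^q \rho_j e^{-ij\omega}|^2$ satisfying~\eqref{eq:stat}.  The paper then uses an \emph{additive} bound $|r(\omega)-g(\omega)|<\varepsilon_0$ rather than your multiplicative one, giving
\[
\left|\frac{\MSD_{Y^\star}(n)}{\MSD_Y(n)}-1\right|\le \varepsilon_0\cdot\frac{\MSD_X(n)}{\MSD_Y(n)}\le \varepsilon_0 L,
\]
and chooses $\varepsilon_0=\varepsilon/L$.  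Your multiplicative bound and Fej\'er--Riesz construction would also work and require the additional hypothesis that $g$ be bounded away from zero, but the off-the-shelf citation is shorter.
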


\subsection{Efficient Computations for the Location-Scale Model}

Let us now consider a $\nd$-dimensional position process $\X(t) = \sum_{j=1}^\np \bbe_j f_j(t) + \SSi^{1/2} \Z(t)$ following the location-scale model~\eqref{eq:lsmodel}.  Then we may construct an $\armapq$ high-frequency model for the measured positions as follows.  Starting from the drift-free stationary increment process $\Delta \tilde \Xts = \{\Delta \tilde \X_n = \SSi^{1/2}\Delta \Z_n: n \in \mathbb Z\}$, define the increment process $\Delta \tilde \Yts = \{\Delta \Y_n: n \in \mathbb Z\}$ via
\begin{equation}\label{eq:hqzero}
  \Delta \tilde \Y_n = \sum_{i=1}^p \theta_i \Delta \tilde \Y_{n-i} + \sum_{j=0}^q \rho_j \Delta \tilde \X_{n-j}.
\end{equation}
Then under parameter restrictions~\eqref{eq:stat}, $\Delta \tilde \Yts$ is a well-defined stationary process with $E[\Delta \tilde \Y_n] = \bm 0$.  In order to add drift to the high-frequency noise model~\eqref{eq:hqzero}, let
\begin{equation}\label{eq:hqdrift}
  \begin{aligned}
    \dX_n & =
    \begin{cases}
      \Delta \tilde \X_n, & n < 0, \\
      \Delta \tilde \X_n + \sum_{m=1}^d \bbe_j \Delta f_{nj}, & n \ge 0,
    \end{cases} \\
    \Delta \Y_n & =
    \begin{cases} \Delta \tilde \Y_n, & n < 0 \\
      \sum_{i=1}^p \theta_i \dY_{n-i} + \sum_{j=0}^q \rho_j \dX_{n-j}, & n \ge 0,
    \end{cases}
  \end{aligned}
\end{equation}
where $\Delta f_{nj} = f_j((n+1)\cdot \dt) - f_m(n\cdot \dt)$.  Then for $n \ge 0$, $\X_n = \sum_{i=0}^{n-1} \dX_i$ corresponds to discrete-time observations of $\X(t)$ from the location-scale model~\eqref{eq:lsmodel}, and $\Y_n = \sum_{i=0}^{n-1} \dY_i$ satisfies the $\armapq$ relation~\eqref{eq:farma}.  Moreover, the observed increments $\dY = (\rv [0] \dY {N-1})$ follow a matrix-normal distribution
\[
  \dY \sim \MN(\F_\pph\bbe, \VV_\pph, \SSi),
\]
where $\F_\pph$ is an $N \times k$ matrix with elements 
\begin{equation}\label{eq:fma-drift}
  F_{nm} = -\sum_{i=1}^{\min\{n,p\}}\theta_i F_{n-i,m} + \sum_{j=0}^{\min\{n,q\}}\rho_j \Delta f_{n-j,m},
\end{equation}
and $\VV_\pph$ is an $N \times N$ Toeplitz matrix with element $(n,m)$ given by $V_\pph^{(n,m)} = \acf_{\Delta Y}(|n-m|)$. Thus, we may use the computationally efficient methods of Section~\ref{sec:parest} for parameter inference, given the autocorrelation function $\acf_{\Delta Y}(n)$ defined by~\eqref{eq:armastat}.   For pure moving-average processes ($p = 0$), this function is available in closed-form given an arbitrary true increment autocorrelation function $\acf_{\Delta Z}(n)$.  For $p > 0$, an accurate and computationally efficient approximation is provided in Appendix~\ref{appendix:farma-acf}.

\subsection{The Fractional $\ma(1)$ Noise Model}\label{sec:fma}

Perhaps the simplest $\armapq$ noise model is that with $p = 0$ and $q = 1$, i.e., the first-order moving-average $\mao$ model given by
\begin{equation}\label{eq:fma1}
  \Y_n = (1-\rho) \X_n + \rho \X_{n-1},
\end{equation}
where $|\rho| < 1$ is required to satisfy~\eqref{eq:stat}, and $\rho < \tfrac 1 2$ is required to satisfy~\eqref{hypo:msd}.  The autocorrelation of the observed increments becomes
\begin{equation}\label{eq:ma-acf}
  \acf_{\dY}(n) = \acf_{\dX}(n) + (1-\rho)\rho \big[ \acf_{\dX}(|n-1|) + \acf_{\dX}(n+1) - 2\acf_{\dX}(n) \big],
\end{equation}
where $\acf_{\dX}(n)$ is the autocorrelation of the true increment process.  Of particular interest is when $\X(t)$ is fractional Brownian motion, for which we refer to the corresponding $\mao$ noise model as fMA. The MSD of such a model is plotted in Figure~\ref{fig:fma-msda} for a range of values $\rho \in (-1, \tfrac 1 2)$. 
As with the fractional Savin-Doyle (fSD) model~\eqref{eq:locmod}, 
$\rho > 0$ raises the high-frequency correlations in the observation process, whereas $\rho < 0$ lowers them.  A similar MSD plot for the fSD model is given in Figure~\ref{fig:fma-msdb}.
While both high-frequency noise models can similarly raise the MSD at short timescales, the fMA model has much higher capacity to lower it.
\begin{figure}[htbp!]
  \centering
  \begin{subfigure}{\textwidth}
    \includegraphics[width = 1\textwidth]{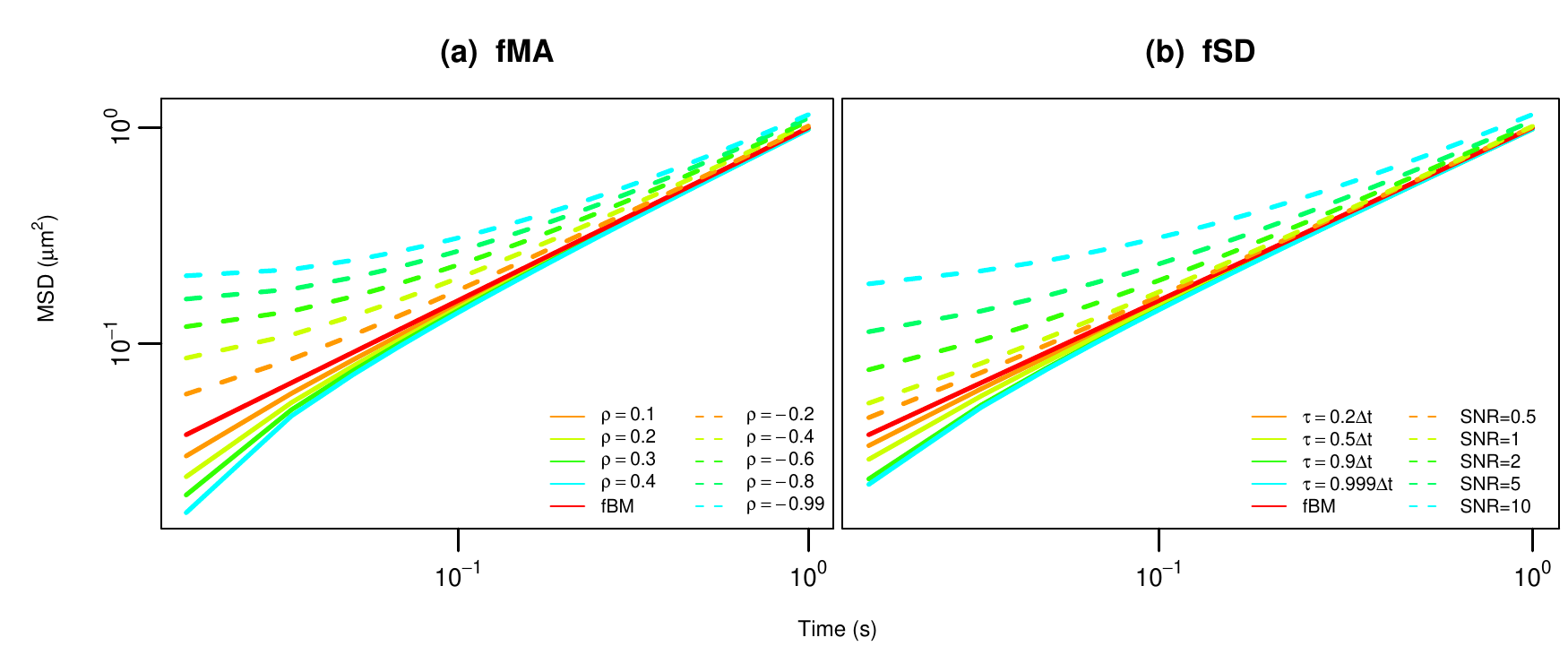}
    \phantomsubcaption\label{fig:fma-msda}\phantomsubcaption\label{fig:fma-msdb}
  \end{subfigure}
  \caption[MSD of fMA and fSD models]{(a) MSD of the fMA model with $\alpha=0.8$ and different values of $\rho$. 
    (b) MSD of the fSD model with $\alpha=0.8$ and different values of $\tau$ and signal-to-noise ratio
    $\SNR = \var(\Delta B^\alpha) / \sigma^2$.}
  \label{fig:fma-msd}
\end{figure}

In order to examine this difference more carefully, the following experiment is proposed.  Suppose that observed increments $\dY = (\rv [0] {\dY} {N-1})$ are generated from a drift-free location-scale fSD model $p(\dY \mid \alpha, \SSi, \tau, \sigma)$.  Then for fixed $N$ and $\dt$, we may calculate the parameters of the (drift-free) fMA model $p(\dY \mid \alpha_\star, \SSi_\star, \rho)$ which minimize the Kullback-Liebler divergence from the true model,
\begin{align*}
  (\hat \alpha_\star, \hat \SSi_\star, \hat \rho)
  & = \argmin_{(\alpha_\star, \SSi_\star, \rho)} \textnormal{KL}\big\{p(\dY \mid \alpha, \SSi, \tau, \sigma)\, \| \, p(\dY \mid \alpha_\star, \SSi_\star, \rho)\big\} \\
  & = \argmin_{(\alpha_\star, \SSi_\star, \rho)} \tr(\SSi_\star^{-1}\SSi)\tr(\VV_\star^{-1}\VV) + \log\left(\frac{|\SSi_\star|^N|\VV_\star|^k}{|\SSi|^N|\VV|^k}\right),
\end{align*}
where $\VV$ and $\VV_\star$ are $N \times N$ Toeplitz variance matrices with first row given by the autocorrelation function of the fSD and fMA models, respectively.

Figure~\ref{fig:costa} displays the difference between true and best-fitting subdiffusion parameters $\hat \alpha_\star - \alpha$ and $\log \hat D_\star - \log D$, for $k=2$, $\SSi = \left[\begin{smallmatrix} 1 & 0 \\ 0 & 1 \end{smallmatrix}\right]$, $N = 1800$, $\dt = 1/60$, and over a range of parameter values $(\alpha, \tau, \sigma)$.  Figure~\ref{fig:costb} does the same, but with the best-fitting fSD model to data generated from fMA.  For all but very high static error $\sigma$ (corresponding to low signal-to-noise ratio $\textnormal{SNR} = \var(\Delta X_n)/\sigma^2$), the fMA model can recover the true subdiffusion parameters $\aD$ with little bias due to model misspecification.  There is significantly more bias when fSD is used on data generated from fMA, particularly when $\rho > 0$ as suggested by Figure~\ref{fig:fma-msd}.
\begin{figure}[htbp!]
  \centering
  \begin{subfigure}{\textwidth}
    \includegraphics[width = 1\textwidth]{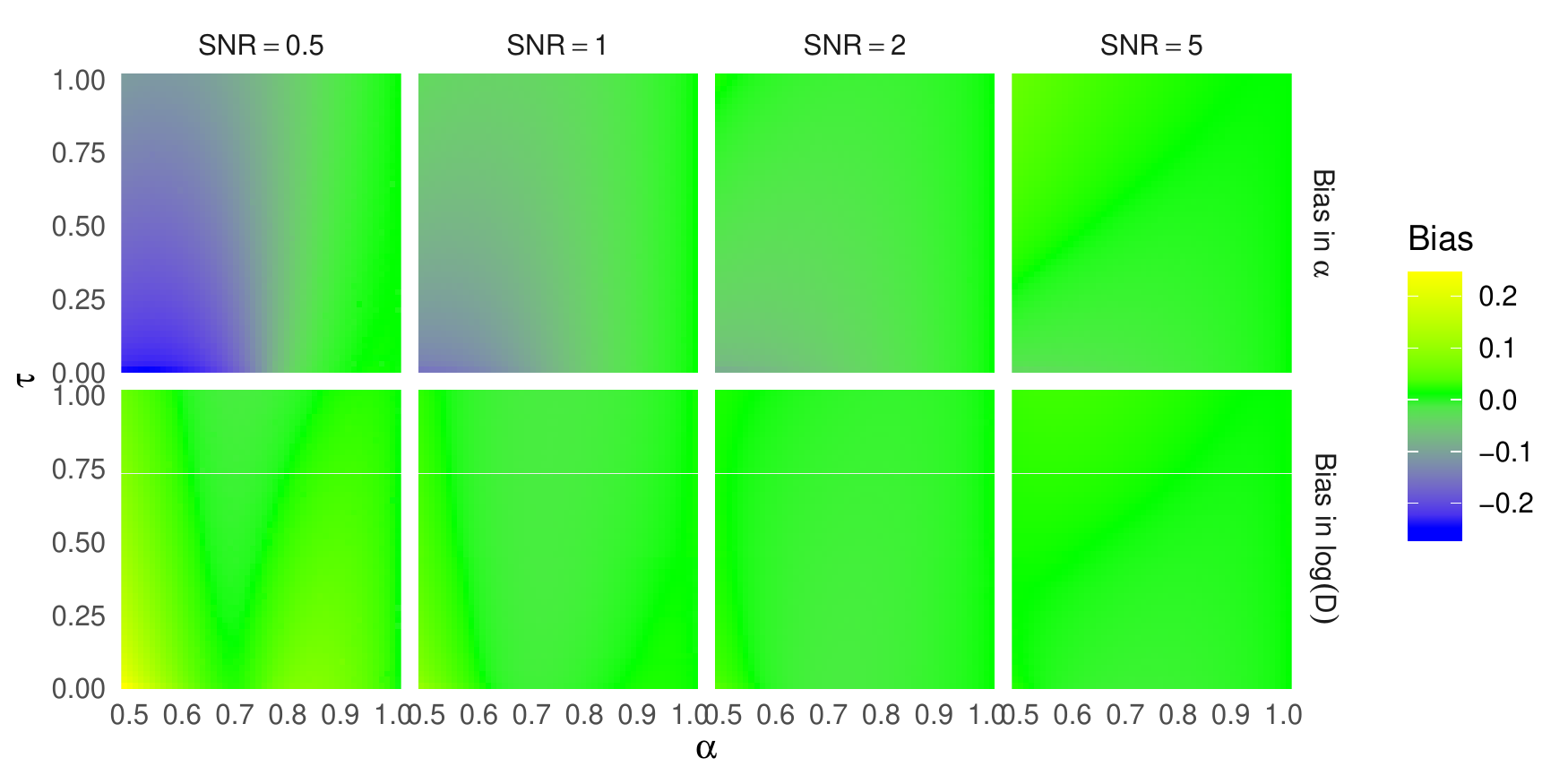}
    \includegraphics[width = 1\textwidth]{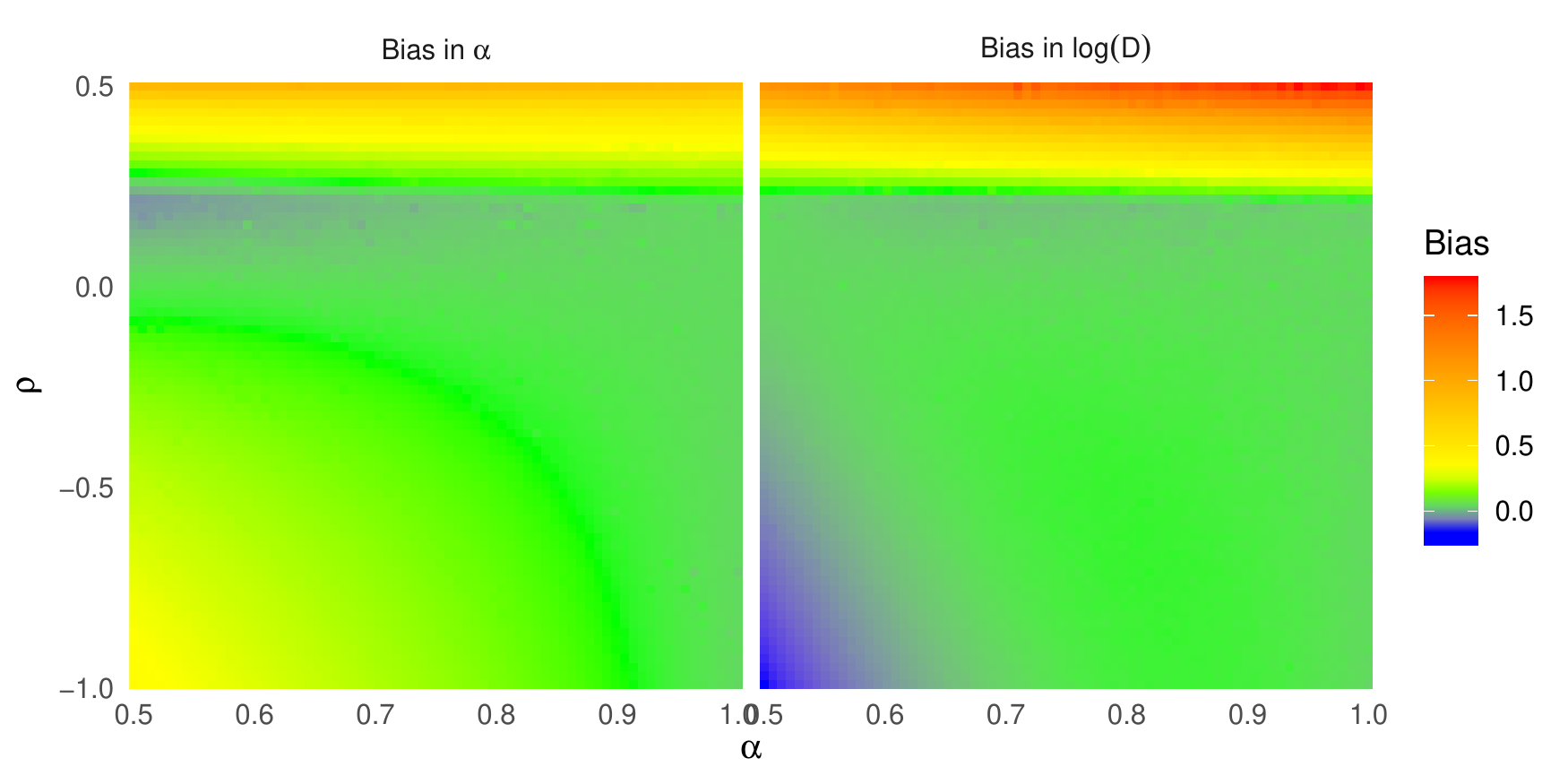}
    \phantomsubcaption\label{fig:costa}\phantomsubcaption\label{fig:costb}
  \end{subfigure}
  \caption{Model misspecification bias in $\alpha$ and $D$.
    (a) Best-fitting fMA model to true fSD models with different values of $\alpha$,  $\tau$, and signal-to-noise ratio $\SNR = \var(\Delta B_n^\alpha) / \sigma^2$.
    (b) Best-fitting fSD model to true fMA models with different values of $\alpha$ and $\rho$.}
  \label{fig:cost}
\end{figure}


\section{Simulation Study}\label{sec:sim}

In this section, we evaluate the performance of the proposed $\armapq$ high-frequency noise filters in various simulation settings.  In each setting, we simulate $B = 500$ observed data trajectories $\Y^{(b)} = (\rv [0] {\Y^{(b)}} N)$, $b = 1,\ldots,B$, each consisting of $N = 1800$ two-dimensional observations ($k = 2$) recorded at intervals of $\dt = \SI{1/60}{\second}$.

\subsection{Empirical Localization Error}\label{sec:sim-loc}

Consider the following simulation setting designed to reflect the localization errors in our own experimental setup.  Let $\Yv$ denote the trajectory measurements for a particle undergoing ordinary diffusion in a viscous environment.  Then we may estimate the MSD ratio
\begin{equation}\label{eq:gn}
g(n) = \frac{\MSD_{\tYv}(n)}{\MSD_{\Xv}(n)},
\end{equation}
where the MSD of the true position process is $\MSD_{\Xv}(n) = 2D t$ with $D$ determined by the Stokes-Einstein relation~\eqref{eq:stokes}, and the MSD of the drift-subtracted observation process $\tYv$ can be accurately estimated by
\[
\widehat{\MSD}_{\tYv}(n) = \frac{1}{M} \sum_{i=1}^M \widehat{\MSD}_{\tYv^{(i)}}(n),
\]
where $\widehat{\MSD}_{\tYv^{(i)}}(n)$ is the empirical MSD~\eqref{eq:msdemp} for each (drift-subtracted) particle trajectory $\tYv^{(1)}, \ldots, \tYv^{(M)}$ recorded in a given experiment (e.g., Figure~\ref{fig:controla}).  We then suppose that the true trajectory is drift-free fBM $\X(t) = \SSi^{1/2} \BB^\alpha(t)$, and simulate the measured trajectories from
\[
\Y^{(b)} \iid \MN\left(\bm 0, \VV, \SSi\right),
\]
where $\SSi = \left[\begin{smallmatrix} 1 & 0 \\ 0 & 1 \end{smallmatrix}\right]$ and the $(N+1) \times (N+1)$ variance matrix $\VV$ is that of a CSI process with MSD given by
\begin{equation}\label{eq:msdg}
\MSD_{\Y}(n) = (\gamma\hat g(n) - \gamma + 1) \times \MSD_{\X}(n),
\end{equation}
where $\hat g(n)$ is the estimated noise ratio~\eqref{eq:gn} from a viscous experiment, and the noise factor $\gamma > 0$ can be used to suppress or amplify the empirical localization error with $\gamma < 1$ or $\gamma > 1$, respectively.  Having constrained our estimator such that $\hat g(n) = 1$ for $n > N_0$,~\eqref{eq:msdg} is a high-frequency noise model as defined by~\eqref{hypo:msd}.  Figure~\ref{fig:sim-demo} displays the observed MSD~\eqref{eq:msdg} for a true fBM trajectory with $\alpha = 0.6$, contaminated by empirical localization errors from two representative viscous experiments described in Table~\ref{tab:data}, illustrating the effects of high-frequency MSD suppression and amplification, respectively.
\begin{figure}[htbp!]
  \centering
  \includegraphics[width = 1\textwidth]{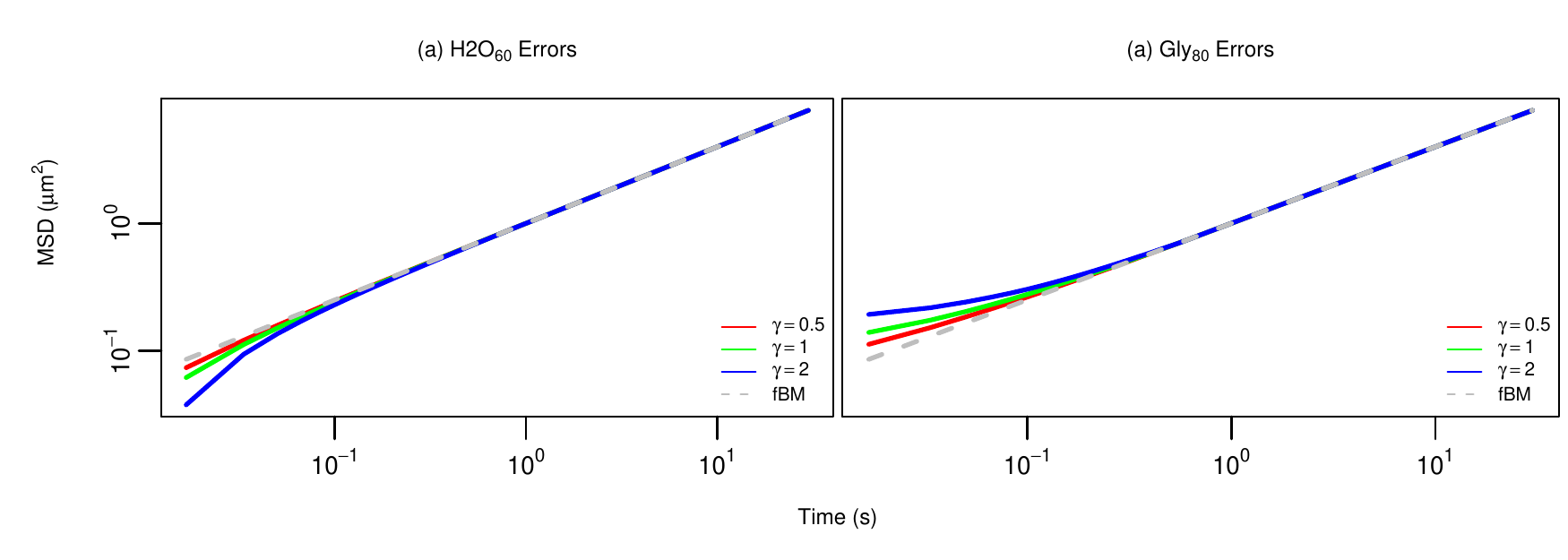}
  \caption[experiments for simulation]{MSD of simulated observations with empirical localization error~\eqref{eq:msdg}, where the true trajectory is an fBM process with $\alpha = 0.6$.  (a) High-frequency MSD suppression as observed in $\water_{60}$ experiment (see Table~\ref{tab:data}).
    (b) High-frequency MSD amplification as observed in $\gly_{60}$ experiment.}
  \label{fig:sim-demo}
\end{figure}

The following methods are used to estimate the subdiffusion parameters $\aD$ for each set of simulated particle observations $\Y^{(b)}$, $b = 1,\ldots B$:
\begin{enumerate}
\item \textbf{LS:} The semiparameteric least-squares estimator~\eqref{eq:ols} applied to the drift-subtracted empirical MSD~\eqref{eq:msdemp}.
\item \textbf{fBM:} The MLE of an fBM-driven location-scale model with linear drift,
  \begin{equation}\label{eq:fbmlin}
    \X(t) = \mmu t + \SSi^{1/2} \BB^\alpha(t),
  \end{equation}
  for which the model parameters are $(\alpha, \mmu, \SSi)$.
\item \textbf{fSD:} The MLE of the Savin-Doyle error model~\eqref{eq:locmod} applied to~\eqref{eq:fbmlin}, for which the model parameters are $(\alpha, \tau, \sigma, \mmu, \SSi)$.
\item \textbf{fMA:} The MLE of the proposed $\ma(1)$ high-frequency noise filter~\eqref{eq:fma1} applied to~\eqref{eq:fbmlin}, for which the model parameters are $(\alpha, \rho, \mmu, \SSi)$.
\item \textbf{fMA2:} The MLE of the proposed $\ma(2)$ high-frequency noise filter
  \[
    \Y_n = (1-\rho_1-\rho_2) \X_n + \rho_1 \X_{n-1} + \rho_2 \X_{n-2}
  \]
  applied to~\eqref{eq:fbmlin}, for which the model parameters are $(\alpha, \rho_1, \rho_2, \mmu, \SSi)$.
\item \textbf{fARMA:} The MLE of the proposed $\arma(1,1)$ high-frequency noise filter
  \[
    \Y_n = \theta \Y_{n-1} + (1-\theta-\rho) \X_n + \rho \X_{n-1}
  \]
  applied to~\eqref{eq:fbmlin}, for which the model parameters are $(\alpha, \theta, \rho, \mmu, \SSi)$.
\end{enumerate}
\begin{remark}
  The fSD exposure time parameter $\tau$ is typically known and therefore need not be estimated from the data.  However, we have opted here to estimate it regardless, as this gives far greater ability to account for high-frequency MSD suppression (e.g., Figure~\ref{fig:locerra}).  We return to this point in Section~\ref{sec:exper}.
\end{remark}

The point estimates for $\aD$ for true fBM trajectories with $\alpha \in \{.6, .8, 1\}$ and empirical error factor $\gamma \in \{.5, 1, 2\}$ are displayed in Figure~\ref{fig:pos-est}. 
\begin{figure}[htbp!]
  \centering
  \includegraphics[width = \textwidth]{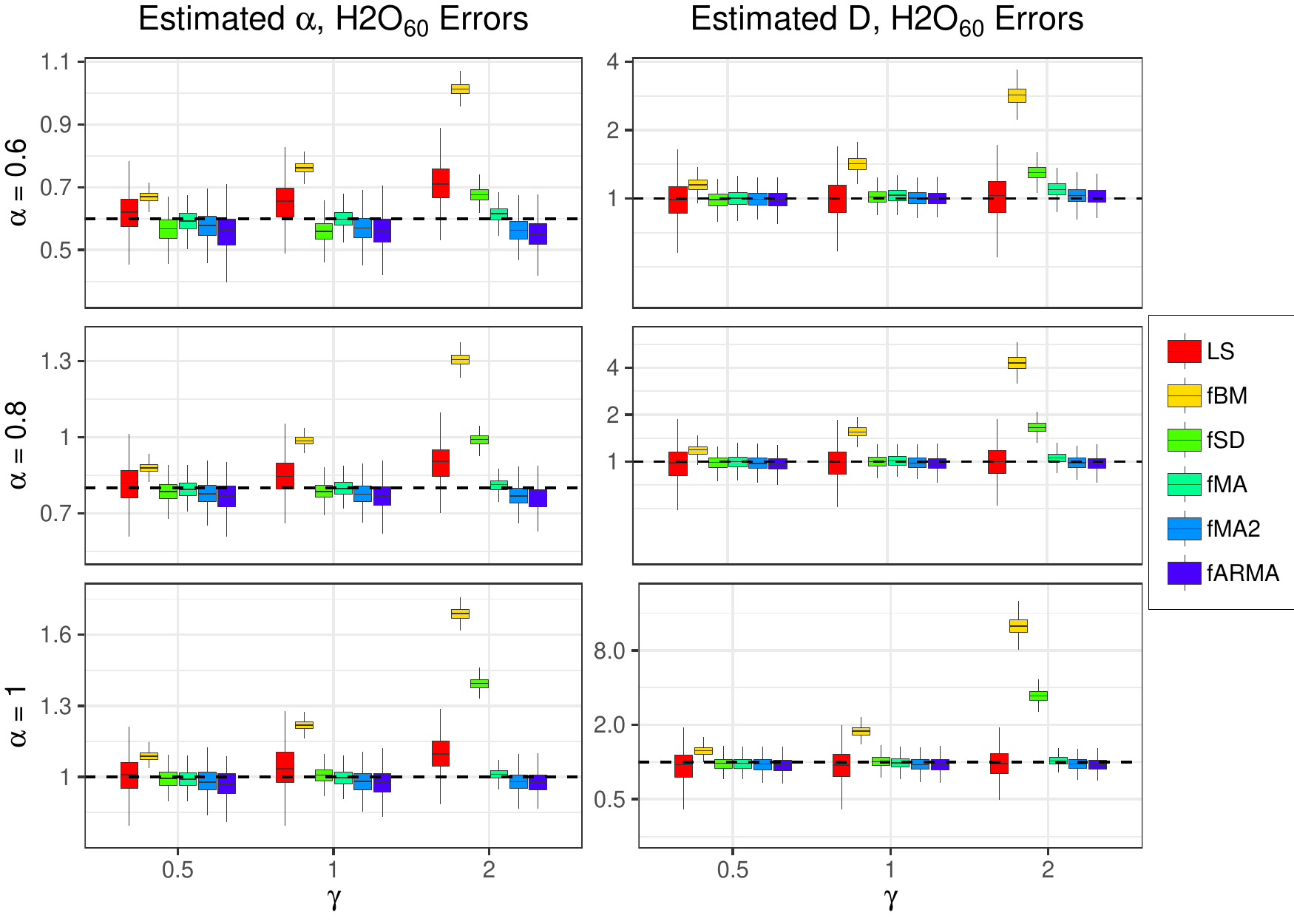}
  \includegraphics[width = \textwidth]{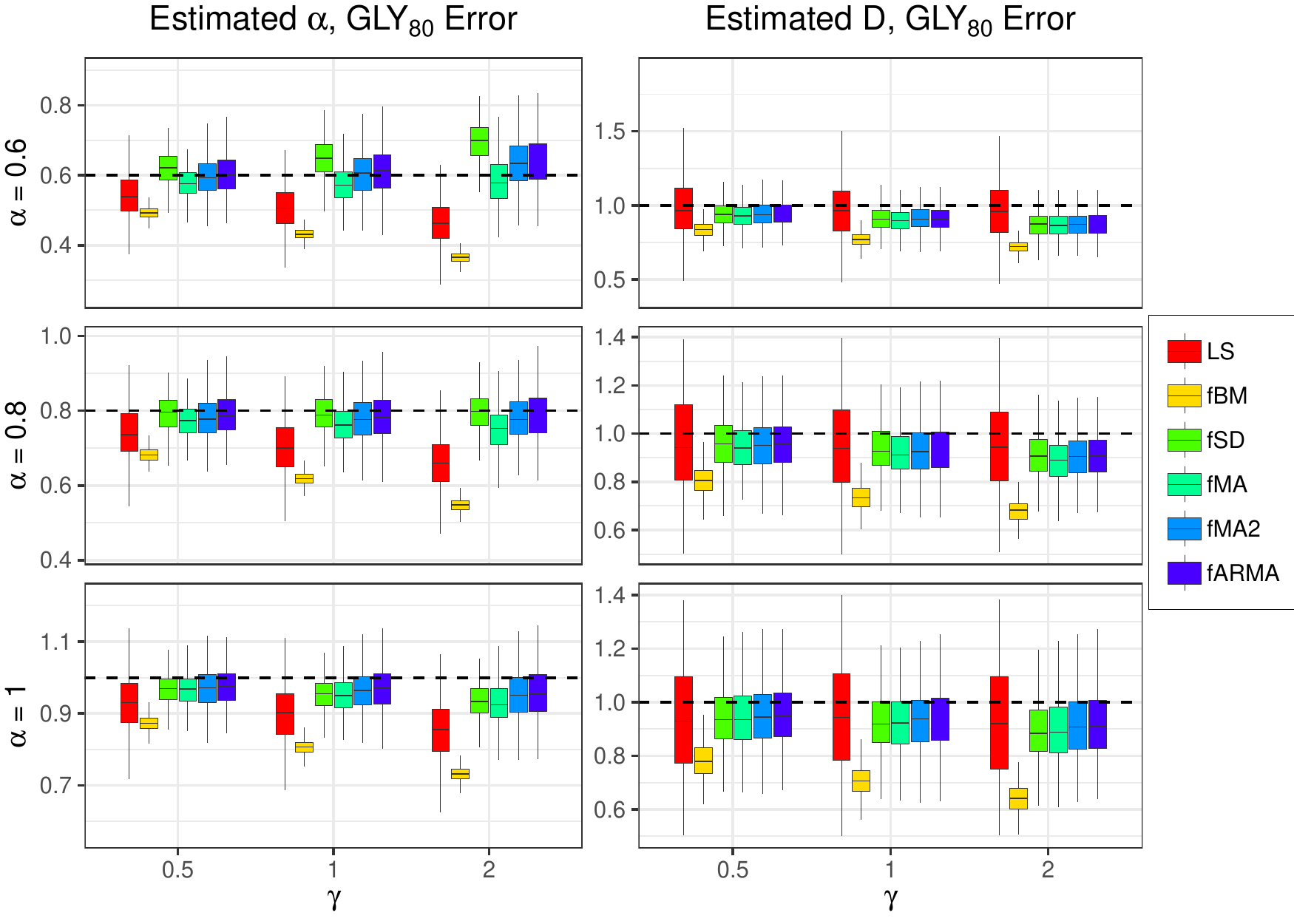}
  \caption{Estimates of $\aD$ for true fBM trajectories with various types and degrees of empirical localization errors.}
  \label{fig:pos-est}
\end{figure}
\begin{table}[!htb]
  \caption{Actual coverage by 95\% confidence intervals with various types and degrees of empirical localization errors.}
  \centering
  \begin{tabular}{@{}lrcccccccc@{}}
    \toprule
    \multicolumn{2}{c}{\multirow{2}{*}{$\rcov(\alpha)$}}
    & 
    & \multicolumn{3}{c}{$\water_{60}$ Errors}
    & 
    & \multicolumn{3}{c}{$\gly_{80}$ Errors} \\
    \cmidrule{4-6} \cmidrule{8-10}
    &
    && $\gamma=0.5$ & $\gamma=1$ & $\gamma=2$ && $\gamma=0.5$ & $\gamma=1$ & $\gamma=2$ \\ 
    \midrule
    \multirow{5}{*}{$\alpha = 0.6$}
    &fBM && 5 &0 &0 & & 0 & 0 & 0 \\ 
    &fSD  && 90  & 87  & 11 && 93 & 84 & 59\\
    &fMA  && 96  & 96   & 90 && 91 & 88 & 88\\
    &fMA2  && 91  & 91   & 84 && 94 & 95 & 94\\
    &fARMA  && 92  & 93   & 87 && 89 & 93 & 93\\
    \midrule
    \multirow{5}{*}{$\alpha = 0.8$}
    &fBM  && 4 & 0 & 0 && 0  & 0  & 0 \\ 
    &fSD  && 91  & 93  & 0 && 92 & 94 & 94\\
    &fMA  &&  93  & 94   & 93 && 87 & 84 & 81\\
    &fMA2  &&  93  & 91   & 87 && 92 & 91 & 93\\
    &fARMA  &&  92  & 91   & 88 && 89 & 90 & 93\\
    \midrule
    \multirow{5}{*}{$\alpha = 1$}
    &fBM   && 1 & 0 & 0 && 0  & 0   & 0 \\ 
    &fSD  &&  13  & 6   & 0 && 23 & 34 & 36\\
    &fMA   &&  95  & 94  & 93 && 87 & 81 & 70\\
    &fMA2   &&  92  & 92  & 94 && 90 & 88 & 84\\
    &fARMA   &&  91  & 92  & 92 && 87 & 86 & 85\\
    \midrule
    \multicolumn{2}{c}{\multirow{2}{*}{$\rcov(\log D)$}}
    & 
    & \multicolumn{3}{c}{$\water_{60}$ Errors}
    & 
    & \multicolumn{3}{c}{$\gly_{80}$ Errors} \\
    \cmidrule{4-6} \cmidrule{8-10}
    &
    && $\gamma=0.5$ & $\gamma=1$ & $\gamma=2$ && $\gamma=0.5$ & $\gamma=1$ & $\gamma=2$ \\ 
    \midrule
    \multirow{5}{*}{$\alpha = 0.6$}
    &fBM   && 57 &1 &0 & & 20 & 1 & 0 \\ 
    &fSD  && 94   & 96  & 10 && 88 & 80 & 72\\
    &fMA  && 96  & 95   & 88 && 86 & 73 & 85\\
    &fMA2  && 94  & 95   & 95 && 86 & 79 & 66\\
    &fARMA  && 94  & 95   & 95 && 87 & 79 & 65\\
    \midrule
    \multirow{5}{*}{$\alpha = 0.8$} &
                                      fBM  && 48 & 0 & 0 && 18  & 2 & 0 \\ 
    &fSD  && 92  & 94  & 1 && 90 & 89 & 82\\
    &fMA  &&  95  & 94   & 94 && 89 & 82 & 76\\
    &fMA2  &&  93  & 94   & 94 && 89 & 86 & 83\\
    &fARMA  &&  91  & 93   & 93 && 89 & 88 & 84\\
    \midrule
    \multirow{5}{*}{$\alpha = 1$}
    &fBM   && 42 & 0 & 0 && 16  & 1   & 0 \\ 
    &fSD  &&  63  & 61   & 0 && 69 & 74 & 67\\
    &fMA   &&  95  & 94  & 95 && 90 & 88 & 80\\
    &fMA2  &&  92  & 92  & 94 && 91 & 90 & 85\\
    &fARMA  &&  90  & 91  & 93 && 91 & 89 & 85\\
    \bottomrule
  \end{tabular}
  \label{tab:pos-neg}
\end{table}
As expected, the semiparametric LS estimator is substantially more variable than any of the fully parametric estimators, and the error-unadjusted fBM estimator incurs considerable bias, even with the smallest noise factor $\gamma = 0.5$.  The high-frequency estimators (fMA, fMA2, and fARMA) are fairly similar to each other, with the additional parameters of fMA2 and fARMA giving them slightly lower bias and higher variance.  The high-frequency estimators are slightly more biased than fSD in the $\gly_{80}$ simulation with $\alpha = 0.8$.
In contrast, they are somewhat less biased than fSD for $\gly_{80}$ with the stronger subdiffusive signal $\alpha = 0.6$, and considerably less so for $\water_{60}$ with the largest noise factor $\gamma = 2$.

Table~\ref{tab:pos-neg} displays the true coverage of the 95\% confidence intervals for each parametric estimator, calculated as
\[
\rcov(\psi) = \frac{1}{B} \sum_{b=1}^B \mathfrak 1\{\theta \in \hat\psi_b \pm 1.96 \se(\hat\psi_b)\},
\]
where $\psi \in \{\alpha, \log D\}$, $\hat \psi_b$ is the MLE for dataset $b$, and $\se(\hat\psi_b)$ is the square root of the corresponding diagonal element of the variance estimator $\widehat{\var}(\hat \tth_b) = -\left[ \frac{\partial^2 \ell(\Y^{(b)} \mid \hat \tth_b)}{\partial \tth \partial \tth'}\right]^{-1}$, where $\hat \tth_b$ is the MLE of all model parameters.  The true coverage of the fMA, fMA2, and fARMA confidence intervals is close to 95\% when the bias is negligible and typically above 85\%.  This is also true for fSD, with the notable exception of either empirical error model and true $\alpha = 1$.  Upon closer inspection, we found that the fSD model suffers from an identifiability issue in the diffusive (viscous) regime, wherein the MSD suppression by $\tau$ and amplification by $\sigma$ achieve the same net effect over a range of values.  This does not affect the estimate of $\aD$, but significantly decreases the curvature of $\ell(\Y \mid \hat \tth)$, thus artifically inflating the observed Fisher information $\widehat{\var}(\hat \tth_b)^{-1}$.

\begin{remark}
  Since the subdiffusion equation $\MSD_{\X}(t) = 2D t^\alpha$ dictates that $D$ be measured in units of $\si{\square\micro\meter\second}^{-\alpha}$, in order to compare estimates of $D$ for different values of $\alpha$ as in Figure~\ref{fig:pos-est}, we follow the convention of interpreting $D$ as half the MSD at time $t = \SI{1}{\second}$~\citep[e.g.,][]{lai.et.al07,wang.et.al08},  which for any $\alpha$ is measured uniformly in units of $\si{\square\micro\meter}$.
\end{remark}

\subsection{Modeling Transient Subdiffusion}\label{sec:gle}

In this section, we show how the proposed high-frequency filter can be used not only for measurement error correction, but also to estimate subdiffusion in models where the power-law relation $\MSD_{\X}(t) \sim t^{\alpha}$ holds only for $t > \tmin$.  For this purpose, here we shall generate particle trajectories from a so-called \emph{Generalized Langevin Equation} (GLE), a physical model derived from the fundamental laws of thermodynamics for interacting-particle systems~\citep[e.g.,][]{kubo66,zwanzig01,kou08}.   For a one-dimensional particle with negligible mass, the GLE for its trajectory $X(t)$ is a stochastic integro-differential equation of the form
\begin{equation}\label{eq:gle}
\int_{-\infty}^t \phi(t-s) V(s) \ud s = F(t),
\end{equation}
where $V(t) = \frac{\ud}{\ud t}X(t)$ is the particle velocity, $\phi(t)$ is a memory kernel, and $F(t)$ is a stationary mean-zero Gaussian force process with \mbox{$\acf_F(t) = \kB T \cdot \phi(t)$}, where $T$ is temperature and $\kB$ is Boltzmann's constant.
The memory of the process is modeled as a generalized Rouse kernel~\citep{mckinley.etal09}:
\begin{equation}\label{eq:rouse}
	\phi(t) = \frac{\nu}{K} \sum_{k=1}^{K} \exp(-|t| / \tau_k), \quad \tau_k = \tau \cdot (K / k)^{\gamma}.
\end{equation}
The sum-of-exponentials form of~\eqref{eq:rouse} is a longstanding linear model for viscoelastic relaxation~\citep[e.g.,][]{soussou.et.al70, ferry80,mason.weitz95}, whereas the specific parametrization of the relaxation modes $\tau_k$ has been shown for sufficiently large $K$ to exhibit \emph{transient} subdiffusion~\citep{mckinley.etal09},
\begin{equation}\label{eq:transub}
  \MSD_X(t) =
  \begin{cases}
    2 \Deff \cdot t^{\aeff} & \tmin < t < \tmax \\
    2 D_{\textnormal{min}} \cdot t & t < \tmin \\
    2 D_{\textnormal{max}} \cdot t &  t > \tmax,
  \end{cases}
\end{equation}
where the subdiffusive range parameters $(\tmin, \tmax)$ and the effective subdiffusion parameters $(\aeff, \Deff)$ are implicit functions of $K$, $\gamma$, $\tau$, and $\nu$.  Details of the parameter conversions and 
the exact form of~\eqref{eq:transub} are provided in Appendix~\ref{appendix:GLE}.

Figure~\ref{fig:gle-msd} displays the MSD of various GLE processes with fixed $K = 300$, and $\{\gamma, \tau, \nu\}$ tuned to have $\aeff = 0.63$, $\Deff = 0.58$, and values of $\tmin/\dt = \{5,10,20,50,100\}$.  In all cases the value of $\tmax$ was several times larger than the experimental timeframe $N\dt = \SI{30}{\second}$, such that the observable MSD could potentially be matched by the fBM-driven high-frequency models of Section~\ref{sec:filter}. 
The trajectories for this experiment were simulated from
\[
  \Y^{(b)} \iid \MN(\bm 0, \VV, \SSi),
\]
where $\SSi = \left[\begin{smallmatrix} 1 & 0 \\ 0 & 1 \end{smallmatrix}\right]$ and $\VV$ is the $(N+1) \times (N+1)$ variance matrix of the GLE process~\eqref{eq:gle} with MSDs displayed in Figure~\ref{fig:gle-msd}.
\begin{figure}[htb!]
  \centering
  \includegraphics[width = .8\textwidth]{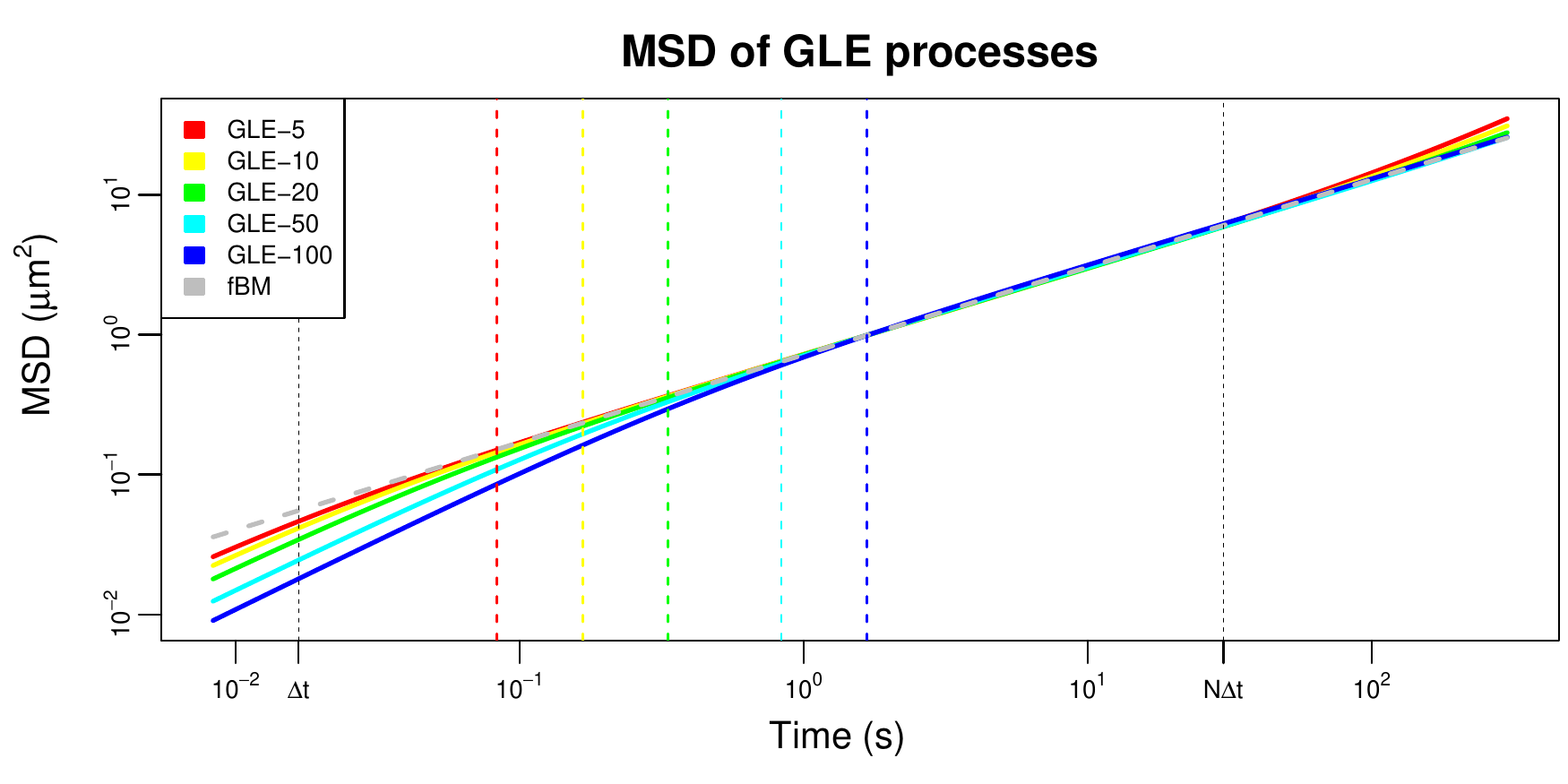}
  \caption{MSD of GLE processes with $\aeff = 0.63$, $\Deff = 0.58$, and $\tmin/\dt = \{5,10,20,50,100\}$.  The horizontal dashed lines indicated $\tmin$, and the diagonal dashed line corresponds to an fBM process with the same subdiffusive parameters $(\aeff, \Deff)$.
    The dotted vertical lines indicate the beginning and end of experiment, at $\dt = \SI{1/60}{\second}$ and $N\dt = \SI{30}{\second}$, respectively.}
  \label{fig:gle-msd}
\end{figure}

Figure~\ref{fig:gle-alpha} displays the parameter estimates of $\aeff$ and $\Deff$ for the six estimators described in Section~\ref{sec:sim-loc}, and Table~\ref{tab:gle} displays the true coverage probabilities of the corresponding 95\% confidence intervals.
\begin{figure}[htb!]
  \centering
  \includegraphics[width = \textwidth]{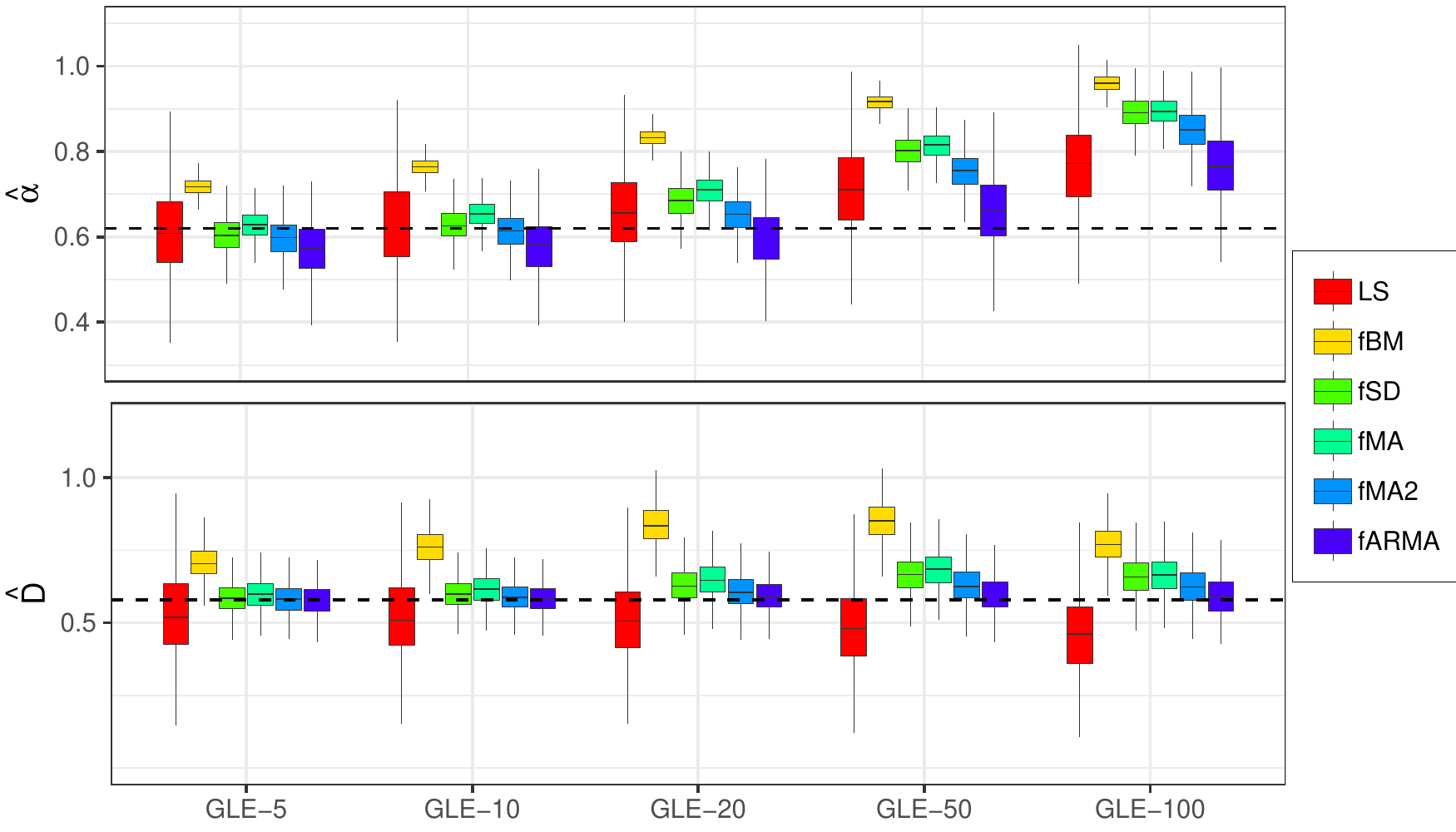}
  \caption{Estimates of $\aeff$ and $\Deff$ for simulated GLE trajectories with true parameters $\aeff = 0.63$, $\Deff = 0.58$, $K = 300$, and $\tmin/\dt = \{5,10,20,50,100\}$.}
  \label{fig:gle-alpha}
\end{figure}
\begin{table}[!htb]
\caption{Actual coverage by 95\% confidence intervals with different GLE processes.}
\centering
\begin{tabular}{@{}lccccc@{}}
\toprule
$\rcov(\alpha)$ &  GLE-5 & GLE-10 & GLE-20 & GLE-50 & GLE-100  \\ 
\midrule
fBM   &  0 & 0 & 0 & 0 & 0 \\ 
fSD   &  96 & 96& 64 & 0 & 0 \\
fMA   & 95 & 84 & 25 & 0 & 0 \\
fMA2   & 92 & 95 & 89 & 15 & 0 \\
fARMA   & 92 & 92& 95 & 85 & 53 \\
\midrule
$\rcov(\log D)$ &  GLE-5 & GLE-10 & GLE-20 & GLE-50 & GLE-100 \\ 
\midrule
fBM   &  31 & 8 & 1 & 1 & 11 \\ 
fSD   &  94 & 95 & 87 & 78 & 74 \\
fMA   & 93 & 92 & 78 & 68 & 81 \\
fMA2   & 94 & 95 & 93 & 93 & 92 \\
fARMA   & 93 & 94 & 93 & 95 & 91 \\
\bottomrule
\end{tabular}
\label{tab:gle}
\end{table}
As in Figure~\ref{fig:pos-est}, the LS estimator has the highest variance and fBM the largest bias.  In this case, however, the fSD and fMA estimators exhibit considerable bias in estimating $\alpha$, especially when $\tmin \gg \dt$.  In contrast, the fARMA estimator displays good accuracy and reasonable coverage even when $\tmin$ is $50\times$ the interobservation time $\dt$.


\section{Analysis of Experimental Data}\label{sec:exper}

We now investigate the performance of our high-frequency filters on a variety of real single-particle tracking experiments described in Table~\ref{tab:data}.
For each experiment, Table~\ref{tab:data} reports the interobservation time $\dt$, the number of particles $M$, the number of observations per trajectory $N$, and the type of camera and particle tracking software.  All tracked particles are inert polystyrene beads of diameter $d = \SI{1}{\micro\meter}$.
\begin{table}[!htb]
  \caption{Summary of experimental conditions for various single-particle tracking experiments.  The different types of fluids are water ($\water$), glycerol ($\gly$), mucus from human bronchial ephithelia cell cultures ($\HBE$), and polyethilene oxide ($\PEO$). The subscripts correspond to sampling frequency for $\water$, percent concentration for $\gly$, and percent weight (wt\%) for $\HBE$ and $\PEO$.  The two types of cameras are Flea3 USB 3.0~\citep[Flea3:][]{flea3} and Panoptes~\citep[Pan:][]{panoptes}.  The particle tracking software employed is either Video Spot Tracker~\citep[VS:][]{vstracker} or Net Tracker~\citep[Net:][]{newby.etal18}.}
  \centering
  \begin{tabular}{@{}cccccccc@{}}
    \toprule
    Medium &  Name & $D$ & $\dt$ (s) & $N$ & $M$ & Camera & Software \\ 
    \midrule
    \multirow{6}{*}{Viscous} & $\water_{15}$    & 0.43 & 1/15 & 1800 & 1293 & Flea3 &Net \\ 
    \multirow{6}{*}{($\alpha=1$)}& $\water_{30}$    & 0.43 & 1/30 & 1800 & 889 & Flea3 &Net \\ 
           & $\water_{60}$    & 0.43 & 1/60 &1800 & 1931 & Flea3 &Net\\ 
           & $\water_{60b}$    & 0.43 & 1/60 &1800 & 313 &Flea3 &VS \\ 
           & $\gly_{60}$    &0.09 & 1/60 &1800 & 532 & Flea3 &VS\\ 
           & $\gly_{80}$    &0.022 & 1/60 &1800 & 358 &Flea3 &VS\\
    \midrule
    \multirow{13}{*}{Viscoelastic} & $\HBE_{1.5}$    & - & 1/60 & 1800 & 63  &Flea3 &VS \\ 
    \multirow{13}{*}{($\alpha$ unknown)}& $\HBE_{2}$    & - & 1/60 & 1800 & 72  &Flea3 &VS\\ 
           & $\HBE_{2.5}$    & - & 1/60 &1800 & 76  &Flea3 &VS\\ 
           & $\HBE_{3}$    & - & 1/60 &1800 & 99  &Flea3 &VS\\ 
           & $\HBE_{4}$    & - & 1/60 &1800 & 180  &Flea3 &VS\\ 
           & $\HBE_{5}$    & - & 1/60 &1800 & 178  &Flea3 &VS\\ 
           & $\PEO_{0.22}$    & - & 1/38.17 &1145 & 123   &Pan &VS\\ 
           & $\PEO_{0.45}$    &- & 1/38.17 &1145 & 205  &Pan &VS \\
           & $\PEO_{0.6}$    & - & 1/38.17 &1145 & 192   &Pan &VS\\ 
           & $\PEO_{0.75}$    & - & 1/38.17 &1145 & 202   &Pan &VS\\ 
           & $\PEO_{0.9}$    & - & 1/38.17 &1145 & 124   &Pan &VS\\ 
           & $\PEO_{1.22}$    & - & 1/38.17 &1145 & 193   &Pan &VS\\ 
    \bottomrule
  \end{tabular}
  \label{tab:data}
\end{table}

\subsection{Viscous Fluids}\label{sec:viscous}

The first six experiments are conducted in viscous fluids (water and glycerol), for which $\alpha = 1$ and the diffusivity constant $D$ is derived from the Stokes-Einstein relation~\eqref{eq:stokes}.  
For the six estimators described in Section~\ref{sec:sim-loc}, estimates of $\aD$ and true coverage probabilities of the associated 95\% confidence intervals are displayed in Figure~\ref{fig:vis-boxplot} and Table~\ref{tab:vis-cvg}, respectively.
\begin{figure}[!htb]
  \centering
  \includegraphics[width = 1\textwidth]{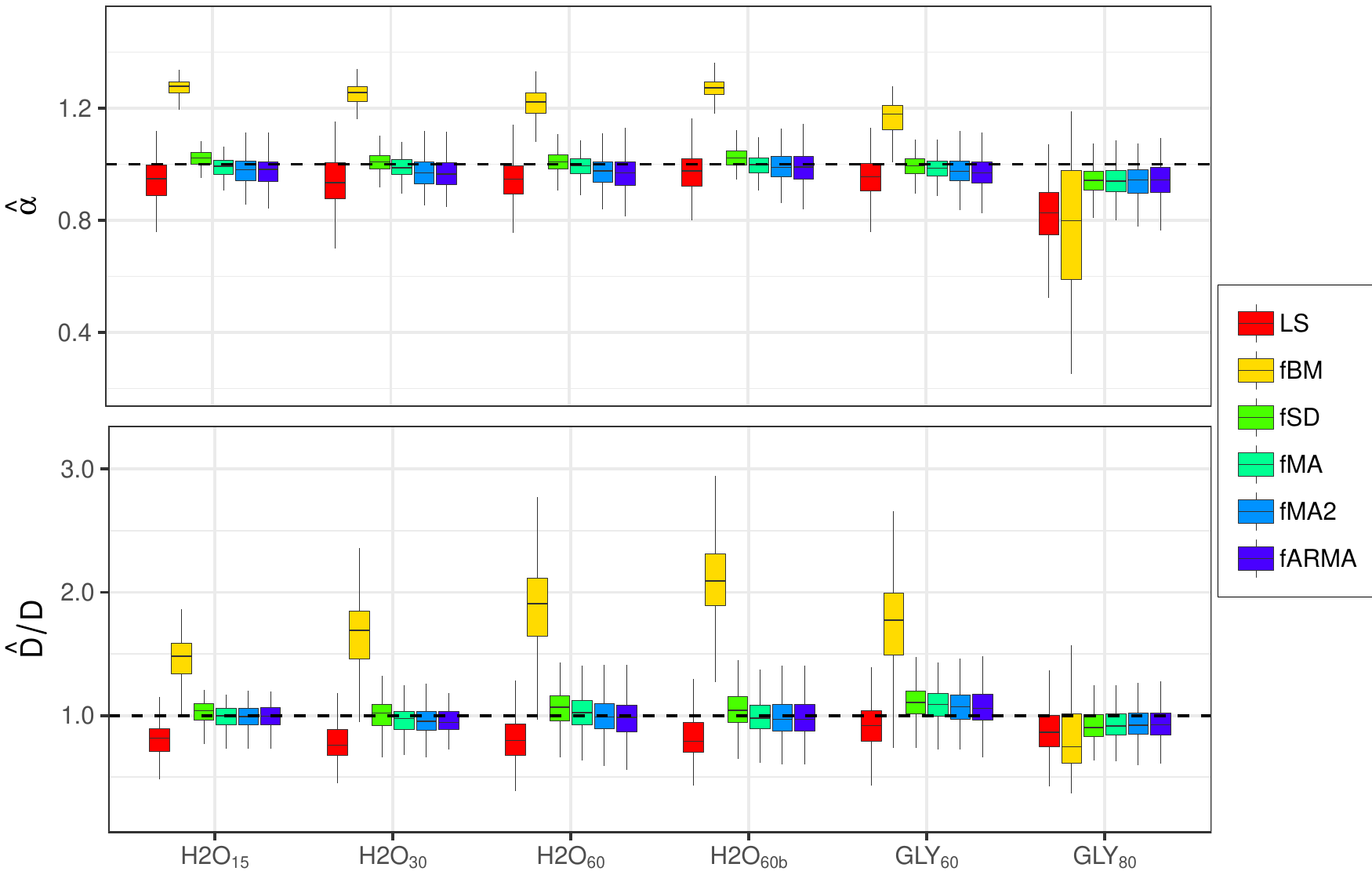}
  \caption{Estimates of $\aD$ for the viscous medium experiments in Table~\ref{tab:data}.}
  \label{fig:vis-boxplot}
\end{figure}
\begin{table}[!htb]
  \caption{Actual coverage by $95\%$ confidence intervals in viscous fluid study.}
  \centering
  \begin{tabular}{@{}lcccccc@{}}
    \toprule
    &  $\water_{15}$  &  $\water_{30}$  &  $\water_{60}$  &  $\water_{60b}$  &  $\gly_{60}$  &  $\gly_{80}$  \\ 
    \midrule
    fBM   &     0 &0 &0 &0 &4 &16 \\ 
    fSD   &     47 &42 &47 &11 &14 &44 \\ 
    fMA   &    94 &90 &93 &85 &90 &71 \\
    fMA2   &    95 &91 &92 &87 &91 &75 \\
    fARMA   &    95 &92 &94 &88 &92 &82 \\
    \bottomrule
  \end{tabular}
  \label{tab:vis-cvg}
\end{table}
\begin{table}[!htb]
  \caption{Ratio of true and estimated exposure time to interobservation time for the fSD model in the viscous medium experiments of Table~\ref{tab:data}.}
  \centering
  \begin{tabular}{@{}lcccccc@{}}
    \toprule
    &  $\water_{15}$  &  $\water_{30}$  &  $\water_{60}$ &  $\water_{60b}$  &  $\gly_{60}$  &  $\gly_{80}$ \\
    \midrule
    True $\tau / \dt$   &     $0.3$ & $0.3$ & $0.3$ & $0.3$ & $0.3$ & $0.3$ \\
    Estimated $\hat\tau / \dt$   &  0.93 & 0.91 & 0.89 & 0.91 & 0.85 & 0.54 \\
    \bottomrule
  \end{tabular}
  \label{tab:vis-tau}
\end{table}
Both fSD and the proposed high-frequency estimators remove most of the bias of fBM without camera error correction.  However, the fSD 95\% confidence intervals suffer from severe undercoverage, due to the parameter identifiability issue noted in Section~\ref{sec:sim-loc}.  Indeed, Table~\ref{tab:vis-tau} shows that the estimated exposure time $\hat \tau$ is
much larger than its true value $\tau$, as required in the $\water$ experiments to capture high-frequency MSD suppression. When $\tau$ is fixed at its true value, fSD estimation results are close those of fBM, 
as illustrated in Figure~\ref{fig:control}.
                                                                           
\subsection{Viscoelastic Fluids}\label{sec:viscoel}

The remaining 12 experiments from Table~\ref{tab:data} are conducted in two kinds of viscoelastic media.  The first consists of mucus harvested from primary human bronchial epithelial (HBE) cell cultures \citep{hill.et.al14}. Washings from cultures were pooled and concentrated to desired weight percent solids (\wt).
Higher concentrations of solids in lung mucus have been associated with disease states, so an accurate recovery of biophysical properties is critical in samples with volumes too small to measure \wt{} directly~\citep{hill.et.al14}. The second medium, polyethylene oxide (\PEO), 
is a synthetic polyether compound with applications in diverse fields ranging from biomedicine to industrial manufacturing~\citep{working.et.al97}. The present data consists of trajectories in 5 megadalton (\si{\mega\dalton}) PEO at a range of \wt{} values. In all 12 viscoelastic experiments, subdiffusive motion $\alpha < 1$ is expected, but the true values of $\aD$ are unknown.

Figure~\ref{fig:viscoelastic} displays the various estimates of $\aD$ for the viscoelastic data.  The high-frequency noise models tend to produce similar results, with the largest differences occurring in the estimates of $\alpha$ at high \wt.
In the absence of 
true values of $\aD$ against which to benchmark our models, 
we compare the different subdiffusion estimators using the following metric.
\begin{figure}[htbp]
  \centering
  \includegraphics[width = 1\textwidth]{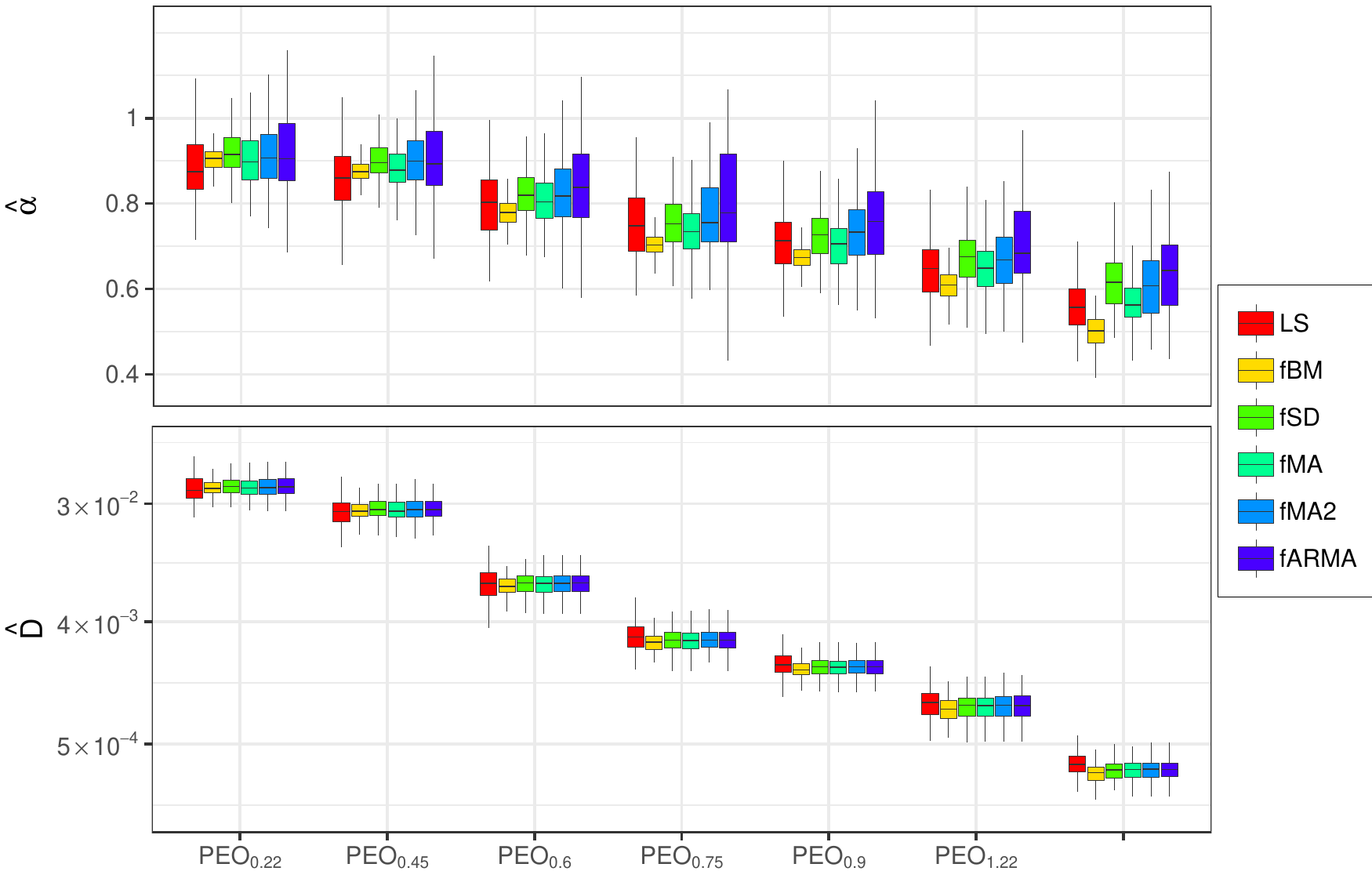}
  \includegraphics[width = 1\textwidth]{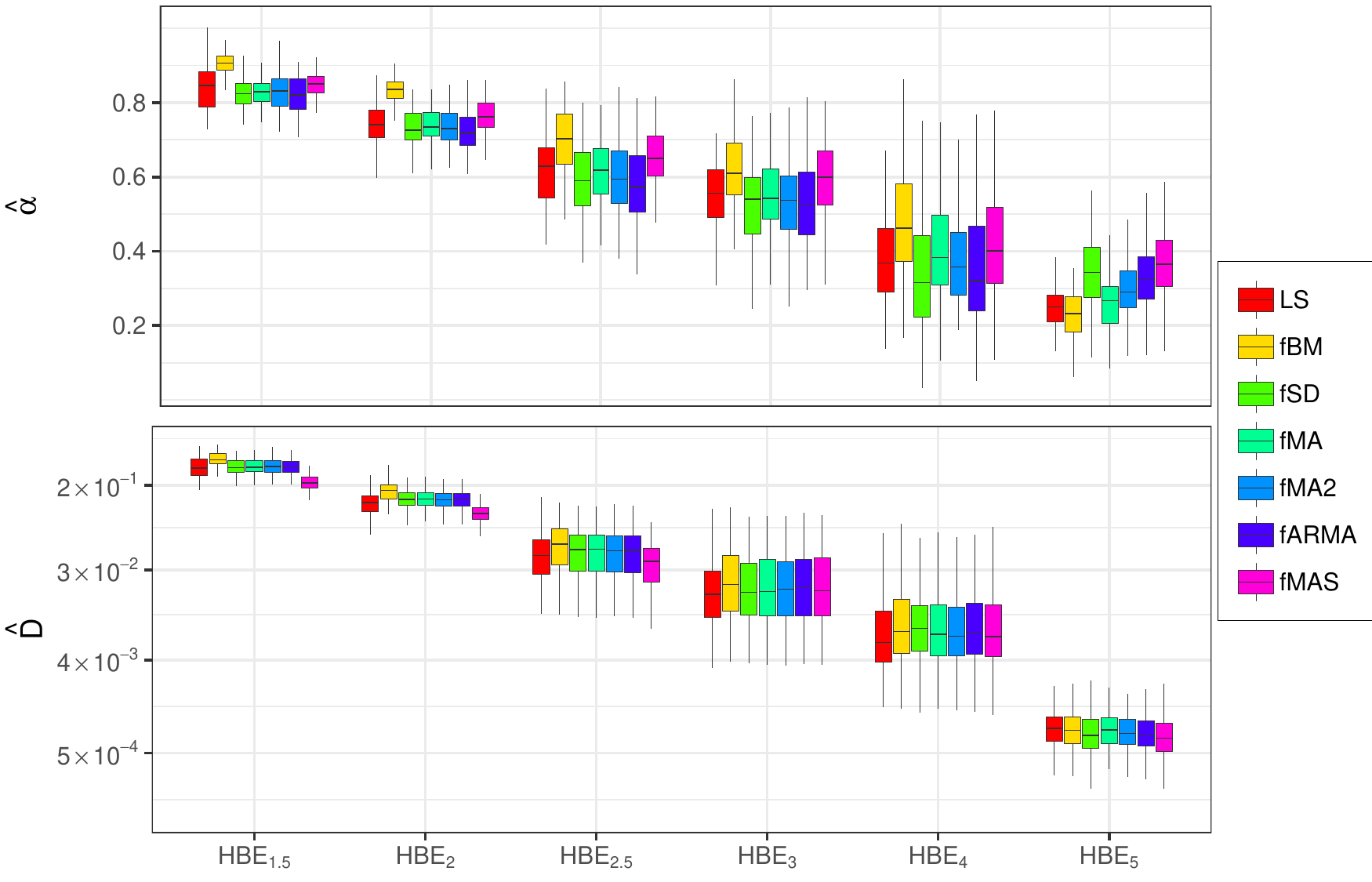}
  \caption{Estimates of $\aD$ for the viscoelastic medium experiments in Table~\ref{tab:data}.  For the $\HBE$ data, the subdiffusive estimators are the six described in Section~\ref{sec:sim-loc}, and that of the fMA + static noise (fMAS) model~\eqref{eq:fmas}.}
  \label{fig:viscoelastic}
\end{figure}

For measurements $\Y = (\Y_0, \Y_1, \ldots, \Y_N)$ of a given particle trajectory, let $\Y_{(r)k} = (\Y_{k},\Y_{k+r},\ldots,\Y_{k+ \lfloor N/r \rfloor r })$ denote the $k$th subset of the measurements downsampled by a factor of $r$.  Downsampling effectively removes all high-frequency dynamics from the particle positions, leading us initially to consider a subdiffusion estimator which maximizes the composite loglikelihood~\citep[e.g.,][]{varin.etal11}
\[
  \elc \sp r(\tth \mid \Y) = \sum_{k=0}^{r-1} \ell_{\textnormal{fBM}}(\tth \mid \Y_{(r)k}),
\]
where $\tth = (\alpha, \bbe, \SSi)$ are the parameters of the location-scale fBM model~\eqref{eq:lsmodel}.  However,
this estimator was found to have very high variance, which, for the purpose of constructing confidence intervals, was poorly estimated by the sandwich method~\citep{freedman06}.  Therefore, we have not pursued this downsampling estimator here.
Instead, we propose to evaluate the accuracy of subdiffusive model $M_j$ by calculating
\begin{equation}\label{eq:dsc}
  \elc\sp r(\hat \tth\sp{M_j} \mid \Y), 
\end{equation}
where $\hat \tth \sp{M_j}$ are the corresponding elements of the MLE under $M_j$ for the complete set of measurements $\Y$.  Larger values of the composite likelihood statistic~\eqref{eq:dsc} 
indicate better agreement with subdiffusive dynamics $\MSD_{\X}(t) = 2D \cdot t^\alpha$ for $t > \dt \times r$.  This approach to comparing models with respect to $\aD$ is evocative of the focused information criterion of~\cite{claeskens.hjort03}.

Table~\ref{tab:composite} reports the improvement in the composite likelihood statistic~\eqref{eq:dsc} of each measurement error model $M_j$ over the noise-free fBM model,
\[
  \dsc\sp r = \frac 1 M \sum_{m=1}^M \Big\{\elc\sp r(\hat \tth\sp{M_j} \mid \Y\sp m) - \elc\sp r(\hat \tth\sp{\textnormal{fBM}} \mid \Y\sp m)\Big\},
\]
where the average is calculated over the trajectories $\Y\sp 1, \ldots, \Y \sp M$ in each viscoelastic experiment of Table~\ref{tab:data}.  Interpretation of the units in Table~\ref{tab:composite} is similar to those of the AIC, upon multiplying ours by a factor of negative two.  However, we do not penalize by the number of parameters here, since all models have the same number of parameters in the subdiffusive range of interest.  We return to this point in the Discussion (Section~\ref{sec:disc}).
\begin{table}[!htb]
  \caption{Average improvement $\dsc\sp r$ in the composite likelihood statistic~\eqref{eq:dsc} relative to fBM for various subdiffusion estimators. For each experiment and downsampling factor $r$, the estimator with the greatest improvement is highlighted in bold.}\label{tab:composite}
  \centering
  \begin{tabular}{@{}lrccccccc@{}}
    \toprule
    \multicolumn{2}{c}{$\PEO$}
    && $0.22$ & $0.45$ & $0.6$ & $0.75$ & $0.9$ &$1.22$ \\ 
    \midrule
    \multirow{4}{*}{$r = 5$} 
    &fSD  && 3.1&2.9&4.2&4.3&3.6&7.6\\
    &fMA  && 2.9&2.5&3.7&4.3&3.8&11\\
    &fMA2  && 4.1&\textbf{4.6}&5.8&5.1&\textbf{4.8}&9.9\\
    &fARMA  && \textbf{4.8}&3.9&\textbf{6.9}&\textbf{5.2}&3.8&\textbf{12}\\
    \midrule
    \multirow{4}{*}{$r = 10$} 
    &fSD  && 2.2&2&2.9&3.5&2.9&5.7\\
    &fMA  &&  1.8&1.9&2.5&3.1&2.5&\textbf{8.7}\\
    &fMA2  &&  \textbf{2.7}&\textbf{3.4}&4.5&\textbf{3.6}&\textbf{3.6}&7.9\\
    &fARMA  &&  \textbf{2.7}&3&\textbf{4.7}&3.5&2.8&7.7\\
    \midrule
    \multirow{4}{*}{$r = 20$} 
    &fSD  &&  1.6&1.6&2.9&2.4&2.6&4.2\\
    &fMA   &&  \textbf{1.7}&1.6&2.3&2.4&1.9&\textbf{7}\\
    &fMA2   &&  1.5&\textbf{2.7}&\textbf{3.9}&\textbf{3.3}&\textbf{2.8}&6.1\\
    &fARMA   && 1.5&1.7&3.3&2&1.7&5\\
    \midrule
    \multicolumn{2}{c}{$\HBE$}
    && $1.5$ & $2$ & $2.5$ & $3$ & $4$ & $5$ \\ 
    \midrule
    \multirow{5}{*}{$r = 5$} 
    &fSD  && 15&29&\textbf{31}&28&29&-60\\
    &fMA  && 15&27&30&28&42&\textbf{0.06}\\
    &fMA2  && 15&\textbf{31}&\textbf{31}&\textbf{29}&\textbf{47}&-9.6\\
    &fARMA  && \textbf{16}&\textbf{31}&30&\textbf{29}&33&-22\\
    &fMAS  && 15&30&\textbf{31}&\textbf{29}&32&-72\\
    \midrule
    \multirow{5}{*}{$r = 10$} 
    &fSD  && 11&21&\textbf{23}&18&12&-53\\
    &fMA  &&  11&20&22&21&30&\textbf{0.25}\\
    &fMA2  &&  \textbf{12}&\textbf{22}&21&\textbf{22}&31&-7.1\\
    &fARMA  &&  11&\textbf{22}&22&20&18&-26\\
    &fMAS  && 11&21&\textbf{23}&19&13&-42\\
    \midrule
    \multirow{5}{*}{$r = 20$} 
    &fSD  &&  \textbf{9}&14&\textbf{16}&11&2.5&-61\\
    &fMA   &&  8.9&14&\textbf{16}&\textbf{18}&\textbf{23}&\textbf{0.81}\\
    &fMA2   &&  8.9&\textbf{17}&15&16&22&-5.3\\
    &fARMA   &&  8.1&16&14&11&7.1&-28\\
    &fMAS  && 9&14&15&16&11&-52\\
    \midrule
    \multirow{5}{*}{$r = 60$} 
    &fSD  &&  2.3&4.1&5.7&4.1&2.3&8\\
    &fMA   &&  2.1&4.3&\textbf{6.2}&\textbf{6.0}&8.5&1.3\\
    &fMA2   &&  2.3&\textbf{5.7}&5.1&5.3&\textbf{9.2}&5.3\\
    &fARMA   &&  \textbf{2.9}&5.1&5.4&4.1&2.7&4\\
    &fMAS  && 2.5&4.5&5.6&5.0&3.3&\textbf{12}\\
    \bottomrule
  \end{tabular}
\end{table}

As expected, noise correction produces significantly better estimates of $\aD$ than does the fBM model alone.  For the $\PEO$ data, the more accurate subdiffusion estimators are fMA2 and fARMA, whereas for $\HBE$ they are fMA and fMA2.  A notable exception is in the highest concentration $\HBE$ at 5~\wt, where for $r = 5,10,20$ all measurement error models except fMA are decisively dominated by noise-free fBM.  To see why this is the case, Figure~\ref{fig:11a} displays the empirical MSDs of three representative particle trajectories from the $\HBE$ 5~\wt{} dataset.  Each of these MSDs exhibits two distinct power-law signatures, with the changepoint occurring around $t = \SI{1}{\second}$.  Figure~\ref{fig:11b} displays the fitted MSD for various subdiffusion estimators.  We can see that fBM and fMA capture only the short-range power-law dynamics, whereas the other estimators capture the power law for $t > \SI{1}{\second}$.  However, for $r=5,10,20$, a sufficient amount of short-range power-law remains for it to outweigh the contribution of the longer-range dynamics in the calculation of the composite likelihood statistic~\eqref{eq:dsc}, thus favoring the fMB and fMA models.
\begin{figure}[htbp!]
  \centering
  \begin{subfigure}{\textwidth}
    \includegraphics[width = \textwidth]{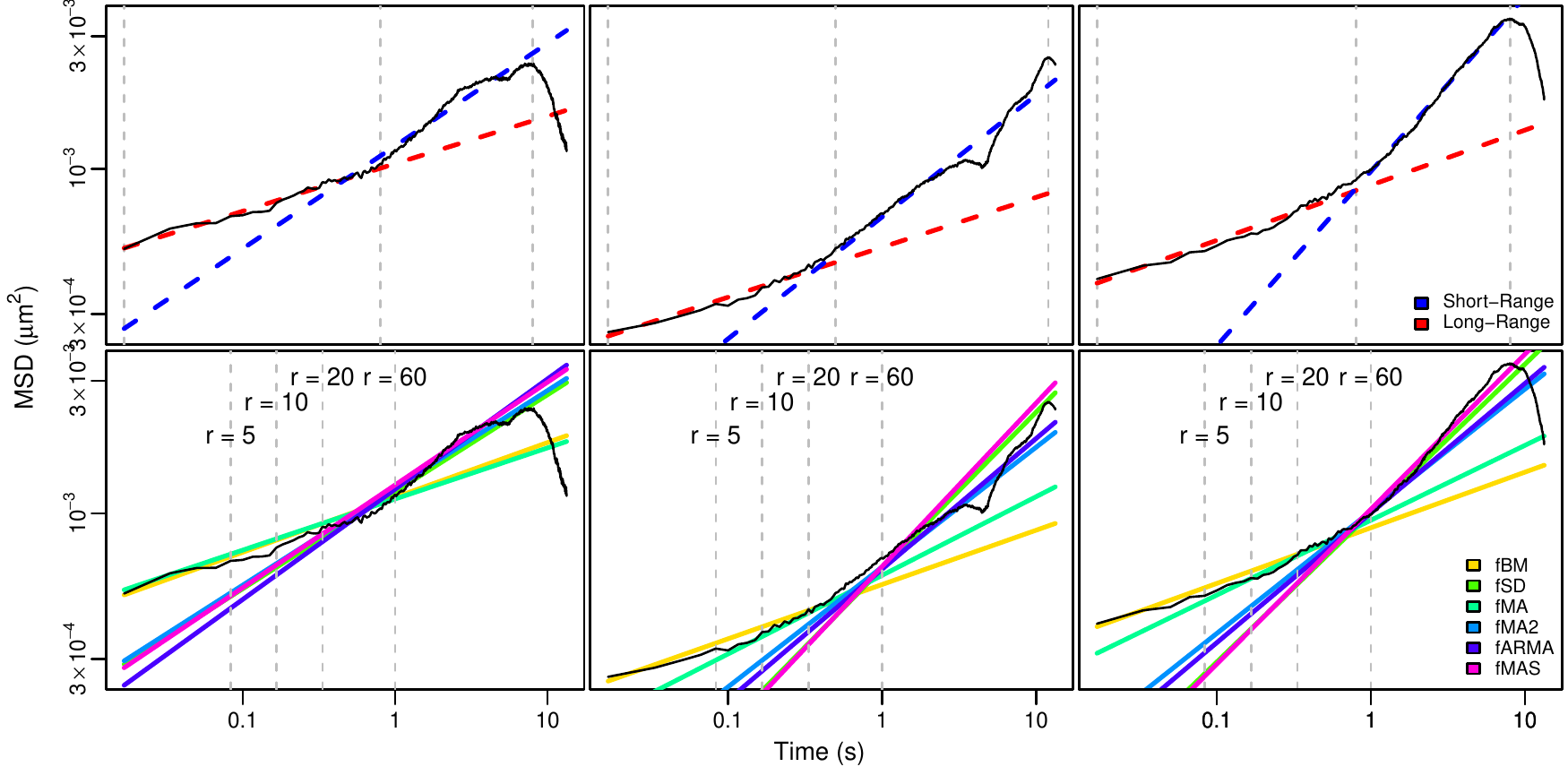}
    \phantomsubcaption\label{fig:11a}
    \phantomsubcaption\label{fig:11b}
  \end{subfigure}
  \caption{(a) Pathwise empirical MSD for 3 representative particles of diameter $\SI{1}{\micro\meter}$ with 5~\wt~mucus concentration, and their transient subdiffusion of two phases. The change point between two phases varies across particles.  (b) Empirical MSD and fitted subdiffusion with different methods, where the subdiffusion is computed using the power-law: $\MSD(t) = 2D\times t^\alpha$ and $\aD$ is extracted from  parametric estimations. Vertical dotted lines for different downsampling rates $r$ are also demonstrated.}\label{fig:11}
\end{figure}

It is theorized that the presence of two distinct power-law signatures in the $\HBE$ 5~\wt{} data is due to the extremely low particle mobility, such that the trajectory displacement signal is substantially masked by the measurement noise floor.  To investigate this, we added the static noise component of the Savin-Doyle model to the fMA model, leading to the so-called fMAS model
\begin{equation}\label{eq:fmas}
  \Y_n = (1-\rho) \X_n + \rho \X_{n-1} + \eps_n.
\end{equation}
Indeed, Table~\ref{tab:composite} indicates that fMAS most accurately captures long-range subdiffusion dynamics for $r = 60$.  It is noteworthy that fMAS outperforms the Savin-Doyle model (fSD) in this setting, suggesting that noise sources other than static and dynamic errors may be present in these data.


\section{Discussion}\label{sec:disc}

We present a family of parametric filters to correct for high-frequency noise in single-particle tracking measurements.  
We demonstrate theoretically that our models can account for a very broad range of localization errors, and show how to combine them with arbitrary models of particle dynamics and low-frequency drift, so as to estimate subdiffusion parameters in a computationally efficient manner.

Compared to the state-of-the-art Savin-Doyle error model, our high-frequency filters generally exhibit lower bias, and much better coverage of confidence intervals for $\alpha \approx 1$, where the Savin-Doyle model suffers from a parameter identifiability issue.  A notable setting in which the Savin-Doyle model outperforms ours is when static noise dominates the high-frequency errors, e.g., in low-mobility experiments such as $\HBE$ 5~\wt.  Indeed, static noise is only covered by our definition of high-frequency noise~\eqref{hypo:msd} if the true position process $\X(t)$ is nonstationary (as is the case for fBM).  However, it is easy to combine static noise with our parametric filters without sacrificing computational efficiency, as we have done for the fMAS model in Section~\ref{sec:viscoel}.

An important practical question is how to determine which high-frequency error model produces the most accurate subdiffusion estimator for a given viscoelastic fluid and instrumental setup.
We have proposed a composite likelihood metric to approach this problem, but accounting for model complexity in the underlying estimation of Kullback-Liebler divergence would benefit from deeper theoretical and empirical investigation.  Possible directions of inquiry for the former are AIC for composite likelihoods~\citep{varin.etal11} and with consistent estimators~\citep{gronneberg.hjort14}, as well as focused information criteria for time series models~\citep{hermansen.etal15}.


\appendix

\section{Profile Likelihood for the Matrix-Normal Distribution}
\label{appendix:profile}

Let $\dX_{N\times k} = (\rv [0] \dX {N-1})$ denote the increments of the location-scale model~\eqref{eq:lsmodel} in matrix form.
Then $\dX$ follows a matrix-normal distribution~\eqref{eq:matnorm}
\begin{equation}\label{aeq:matnorm}
  \begin{aligned}
    \dX \sim \MN(\F\bbe, \VV_\pph, \SSi) \\
    \iff \qquad \vec(\dX) & \sim \N(\vec(\F\bbe), \SSi \otimes \VV_\pph),
  \end{aligned}
\end{equation}
where $\vec(\dX)$ concatenates the columns of $\dX$ into a vector of length $Nk$, similarly for $\vec(\F\bbe)$, and $\otimes$ denotes the Kronecker matrix product. 

As shown in~\cite{lysy.etal16}, the parameters $\tth = (\pph, \bbe, \SSi)$ of~\eqref{aeq:matnorm} can be efficiently estimated using a profile likelihood.  Consider a generalized matrix-normal model
\[
  \Y_{N\times k} \sim \MN(\F_\pph \bbe, \VV_\pph, \SSi),
\]
where both the design matrix $\F_\pph$ and the row-wise covariance $\VV_\pph$ depend on $\pph$.  Then for fix $\pph$, 
the conditional MLE of $(\bbe, \SSi)$ 
is given by
\begin{equation}
\begin{aligned}
&\hat{\bbe}_{\pph} = (\F_\pph'\VV_\pph^{-1}\F_\pph)^{-1}\F_\pph'\VV_\pph^{-1}\Y \\
&\hat{\SSi}_{\pph} = \frac{1}{N} (\Y - \F_\pph\hat{\bbe}_{\pph})' \VV_\pph^{-1} (\Y - \F_\pph\hat{\bbe}_{\pph}),
\end{aligned}
\end{equation}
from which we may calculate the profile loglikelihood
\begin{equation}
  \begin{aligned}
    \elp(\pph \mid \Y)
    & = \ell(\pph, \bbe = \hat \bbe_{\pph}, \SSi = \hat \SSi_{\pph} \mid \Y) \\
    & = -\tfrac{1}{2} \big\{k\log|\VV_\pph| + N\log|\hat \SSi_{\pph}| + Nk \big\}.
  \end{aligned}
\end{equation}
Upon solving the reduced optimization problem $\hat \pph = \argmax_{\pph} \elp(\pph \mid \Y)$,
we obtain $\hat \tth = (\hat \pph, \bbe_{\hat \pph}, \SSi_{\hat \pph})$ as the MLE of the full likelihood $\ell(\tth \mid \Y)$.  This technique can be used for all the measurement error models presented in this paper.


\section{Inference for the fSD Model}
\label{appendix:fdl-acf}

The $k$-dimensional fSD model~\eqref{eq:locmod} takes the form
\begin{equation}\label{aeq:fsd}
  \begin{aligned}
  \X(t) & = \sum_{j=1}^\np \bbe_j f_j(t) + \SSi^{1/2} \Z(t), \\
  \Y_n & = \frac 1 \tau \int_0^\tau \X(t_n - s) \ud s + \eps_n,
  \end{aligned}
\end{equation}
where $t_n = n\cdot\dt$, $\Z(t) = \big(Z_1(t), \ldots, Z_k(t)\big)$ with $Z_i(t) \iid B_\alpha(t)$, and $\eps_n \iid \N(\bm{0}, \sigma^2 \cdot \SSi)$ are independent of $\Z(t)$.  Letting $\dY_n = \Y_{n+1} - \Y_n$, we can rewrite~\eqref{aeq:fsd} to obtain
\[
  \dY_n = \sum_{j=1}^d \bbe_j \Delta \fs_{nj} + \SSi^{1/2} (\Delta \Zs_n - \Delta \eet_n),
\]
where
\[
  \fs_{nj} = \frac{1}{\tau}\int_{0}^\tau f_j(t_n-s) \ud s, \qquad Z_{ni}^\star = \frac 1 \tau \int_0^\tau Z_i(t_n-s) \ud s,
\]
and $\eet_n = \SSi^{-1/2}\eps_n \iid \N(\bm{0}, \sigma^2 \bm I_d)$.  Thus we have $\fs_{nj} = \frac 1 \tau \int_0^\tau f_j(t_n - s) \ud s$, 
$\Zs_n = (Z_{n1}^\star, \ldots Z_{nk}^\star)$ with $Z_{ni}^\star = \frac 1 \tau \int_0^\tau Z_i(t_n-s) \ud s$, and $\eet_n = \SSi^{-1/2}\eps_n \iid \N(\bm{0}, \sigma^2 \bm I_d)$.  Thus, we have
\begin{equation}
  \dY_{N\times k} \sim \MN(\F \bbe, \VV_{\pph}, \SSi),
\end{equation}
where $\F_{N\times d}$ has elements $\F_{nj} = \Delta \fs_{nj}$, $\VV_{\pph}$ is a variance matrix parametrized by $\pph = (\alpha, \tau, \sigma)$ with elements
\begin{align*}
  \VV_{\pph}^{(n,m)}
  & = \cov(\Delta Z^\star_{ni} + \Delta \eta_{ni}, \Delta Z^\star_{mi} + \Delta \eta_{mi}) \\
  & = \cov(\Delta Z^\star_{ni}, \Delta Z^\star_{mi}) + \cov(\Delta \eta_{ni}, \Delta \eta_{mi}).
\end{align*}
To finish the calculations, without loss of generality we may focus on the one-dimensional case $Z_i(t) = Z(t) = B_\alpha(t)$ and $\eta_{in} = \eta_{n} \iid \N(0, \sigma^2)$.  Thus we have
\begin{align*}
  \cov(Z_n^\star, Z_m^\star)
  & = E[Z_n^\star Z_m^\star] \\
  & = \frac 1 {\tau^2} E\left[\int_{0}^{\tau} Z(t_n - s) \ud s \cdot \int_{0}^{\tau} Z(t_m - u) \ud u\right] \\
  & = \frac 1 {\tau^2} E\left[\int_0^\tau \int_0^\tau Z(t_n - s) Z(t_m - u) \ud s \ud u\right]\\
  & = \frac 1 {\tau^2} \int_0^\tau \int_0^\tau E\left[Z(t_n - s) Z(t_m - u) \right] \ud s \ud u,
\end{align*}
where the last line is obtained from the Fubini-Tonelli theorem, since by Cauchy-Schwarz we have
\begin{align*}
  \int_{0}^\tau\int_0^\tau E\Big[|Z(t_n - s) Z(t_m - u)| \Big]\ud s \ud u
  & \le \sqrt{\int_0^\tau E[Z(t_n - s)^2] \ud s \cdot \int_0^\tau E[Z(t_m - u)^2]\ud u} \\
  & = \sqrt{\int_0^\tau \MSD_Z(t_n - s) \ud s \cdot \int_0^\tau \MSD_Z(t_m - u)\ud u},
\end{align*}
and the right-hand side is finite as long as $\MSD_Z(t)$ is continuous for $t \ge 0$.  Thus, for the fBM process $Z(t) = B_\alpha(t)$ we have
\begin{align*}
  \cov(Z^\star_n, Z^\star_m)
  & = \frac 1 {2\tau^2} \int_0^\tau \int_0^\tau (t_n-s)^\alpha + (t_m-u)^\alpha - |(t_n-t_m) - (s-u)|^\alpha \ud s \ud u \\
  & = h_\tau(t_n) + h_\tau(t_m) - g_\tau(t_n - t_m),
\end{align*}
where
\[
  g_\tau(t) = \frac{ |t+\tau|^{\alpha+2} + |t-\tau|^{\alpha+2} - 2|t|^{\alpha+2}  }{2\tau^2(\alpha+1)(\alpha+2)}, \qquad h_\tau(t) = \frac{(t-\tau)^\alpha - t^\alpha}{2\tau(\alpha+1)}.
\]
Finally, since for any increment process $\Delta X_n$ we have
\begin{equation}\label{eq:covxy}
\cov(\Delta X_n, \Delta X_m) = E\left[ X_{n+1} X_{m+1} \right] - E\left[ X_{n+1}  X_m \right] - E\left[ X_n X_{m+1} \right] + E\left[ X_n X_m \right],
\end{equation}
we may calculate that
\begin{equation}\label{eq:fdyn_acf}
\acf_{\Delta Z^\star}(n) = \cov(\Delta Z^\star_n, \Delta Z^\star_{m+n}) = g_\tau(|n+1|\dt) + g_\tau(|n-1|\dt) - 2g_\tau(|n|\dt).
\end{equation}
Similarly, we obtain
\[
  \acf_{\Delta \eta}(n) = \sigma^2 \times \big\{2\cdot \id(n = 0) - \id(n=1)\big\},
\]
such that $\VV_{\pph}$ is a Toeplitz matrix with elements
\[
  \VV_{\pph}^{(n,m)} = \acf_{\Delta Z^\star}(n-m) + \acf_{\Delta \eta}(n-m).
\]


\section{Calculations for ARMA Noise Models}

\subsection{Relationship Between ACF and MSD}\label{appendix:msd2acf}

Let $X(t)$ be a one-dimensional CSI process with evenly-spaced observations $X_n = X(n\dt)$, such that
\[
  \MSD_X(n) = E[(X_n - X_0)^2].
\]
If $\Delta X_n = X_{n+1} - X_n$ is the corresponding increment process, then we have
\begin{equation}
\acf_{\Delta X}(n) = E[X_{n+1} X_1] + E[X_n X_0] - E[X_{n+1} X_0] - E[X_n X_1].
\end{equation}
Combined with the fact that
\[
  \MSD_{X}(n) = E[ X_n^2 ] + E[ X_0^2 ] - 2 E[ X_n X_0 ],
\]
we find that
\[
\acf_{\Delta X}(n) = \frac{1}{2} \{\MSD_{X}(|n-1|) + \MSD_{X}(|n+1|) - 2 \MSD_{X}(|n|)  \}.
\]
Conversely, we have
\[
  \MSD_{X}(n) = \MSD_{X}(n-1) + \acf_{\Delta X}(0) + 2 \sum_{h=1}^{n-1} \acf_{\Delta X}(h),
\]
such that
\begin{equation}\label{eq:acf2msd}
\MSD_{X}(n) = (n+1) \acf_{\Delta X}(0) + 2 \sum_{h=1}^{n} (n+1-h) \acf_{\Delta X}(h).
\end{equation}

\subsection{Autocorrelation Function of the $\armapq$ Filter}\label{appendix:farma-acf}

Consider a one-dimensional stationary increments process determined by the $\armapq$ filter~\eqref{eq:hqdrift},
\begin{equation}\label{eq:1d-farma}
\ddY_n = \sum_{i=1}^p \theta_i \ddY_{n-i} + \sum_{j=0}^q \rho_j \ddX_{n-j},
\end{equation}
for which the driving process $\ddX_n$ is assumed to have mean zero.  In the following subsections we shall calculate the autocorrelation function $\acf_{\ddY}(n)$ as a function of $\acf_{\ddX}(n) = \cov(\Delta X_m, \Delta X_{m+n})$.

\subsubsection{Autocorrelation of the  $\ma(q)$ Filter}\label{appendix:fma-acf}

For a purely moving-average process
\begin{equation}
\ddY_n = \sum_{i=0}^q \rho_i \ddX_{n-i},
\end{equation}
we have
\begin{equation}\label{eq:fma-acf}
  \acf_{\ddY}(n)
  =  \sum_{i=0}^q \sum_{j=0}^q \rho_i\rho_j \acf_{\ddX} (n+i-j).
\end{equation}
This can be computed efficiently for all values of $\gga = (\rv [0] \gamma {N-1})$, $\gamma_n = \acf_{\ddY}(n)$,
using the following method.  Let $\eta_n = \acf_{\ddX}(n)$, $\bz_N$ denote the vector of $N$ zeros, and
for vectors $\bm a = (\rv a N)$ and $\bm b = (a_1, \rv [2] b M)$, let $\Toep(\bm a, \bm b)$ denote the $M \times N$ Toeplitz matrix with first row being $\bm a$ and first column $\bm b$:
\[
  \Toep(\bm a, \bm b) =
  \begin{bmatrix}
    a_1    & a_2    & a_3    & \cdots & \cdots & a_N    \\
    b_2    & a_1    & a_2    & \ddots &        & \vdots \\
    b_3    & b_2    & \ddots & \ddots & \ddots & \vdots \\
    \vdots & \ddots & \ddots & \ddots & a_2    & a_3    \\
    \vdots &        & \ddots & b_2    & a_1    & a_2    \\
    b_M    & \cdots & \cdots & b_3    & b_2    & a_1
  \end{bmatrix}.
\]
Then $\gga$ can be computed by the matrix multiplication
\[
  \gga = \Toep(\rrh_1, \rrh_2) \cdot \Toep(\eet_1, \eet_2) \cdot \rrh_0,
\]
where
\begin{align*}
  \eet_1 & = (\rv [0] \eta q), & \eet_2 & = (\rv [0] \eta {N+q}), \\
  \rrh_0 & = (\rv [0] \rho q), & \rrh_1 & = (\rrh_0, \bz_{N+1}), & \rrh_2 & = (\rho_0, \bz_{N-1}).
\end{align*}
Moreover, Toeplitz matrix-vector multiplication can be computed efficiently using the fast Fourier transform (FFT)~\citep[e.g.,][]{kailath.sayed99}. 
That is, let $\FFT$ denote FFT the matrix of the appropriate dimension.  In order to compute $\gga$, we perform the following steps:
\begin{enumerate}
\item Let $\vv_3 = \FFT^{-1}(\FFT \vv_1 \odot \FFT \vv_2)$, where $\vv_1 = (\eet_2, 0, \rv [q] \eta 1)$, $\vv_2 = (\rrh_0, \bz_{N+q+1})$, and $\odot$ denotes the elementwise product between vectors.
\item Let $\vv_4$ denote the first $N+q+1$ elements of $\vv_3$.
\item Let $\vv_7 = \FFT^{-1}(\FFT \vv_5 \odot \FFT \vv_6)$, where $\vv_5 = (\rho_0, \bz_{2N}, \rv [q] \rho 1)$ and $\vv_6 = (\vv_4, \bz_{N})$.
\item $\gga$ is given by the first $N$ elements of $\vv_7$.
\end{enumerate}

\subsubsection{Autocorrelation of the $\ar(p)$ Filter}\label{appendix:far_acf}

For a purely autoregressive process
\begin{equation}
\ddY_n = \sum_{i=1}^{p} \theta_i \ddY_{n-i} + \ddX_n,
\end{equation}
the autocorrelation $\acf_{\ddY}(n)$ involves an infinite summation which generally cannot be simplified further.  Instead, we approximate the $\ar(p)$ filter with an $\ma(q)$ filter and use the result of Section~\ref{appendix:fma-acf}.  To do this, we rewrite $\ddY_n$ in terms of the lag operator $B$, such that
\[
  \ddY_n = \theta(B) \ddY_n + \ddX_n,
\]
where $\theta(x) = \theta_1 x + \cdots + \theta_p x^p$, and $B^k \ddY_n = \ddY_{n-k}$.  Rearranging terms and expanding into a power series, we find that
\begin{align*}
  \ddY_n
              & \textstyle = [1-\theta(B)]^{-1}\ddX_n \\
              & \textstyle = \left[1 + \sum_{i=1}^\infty[\theta(B)]^i\right]\ddX_n = \left[1 + \sum_{i=1}^\infty\rho_i B^i\right] \ddX_n,
\end{align*}
such that $\ddY_n$ may be expressed as an $\ma(\infty)$ series.  Truncating to order $q$, the true autocorrelation $\acf_{\ddY}(n)$ is approximated by the autocorrelation~\eqref{eq:fma-acf} of the corresponding $\ma(q)$ process $\ddY_n \approx \sum_{i=1}^q \rho_i \ddX_{n-i}$.  The following lemma can be used to efficiently calculate the coefficients $\rho_i$.
\begin{lemma}\label{lemma1}
  Consider a polynomial $g(x) = \sum_{k=0}^p a_k x^k$ and its $n$-th power, $G(x) = [g(x)]^n = \sum_{k=0}^m b^{(n)}_k x^k$, where $m = n\cdot p$.   Then we have
  \begin{equation}
    \left[\tfrac{\ud}{\ud x} G(x)\right] g(x) = n \left[\tfrac{\ud}{\ud x} g(x)\right] G(x).
  \end{equation}
  As a result, when $a_0 \neq 0$ we can derive the coefficients of $G(x)$ recursively, with $b_0^{(n)} = a_0^n$ and
  \begin{equation}\label{aeq:polyrec}
    b_k^{(n)} = \frac{1}{k a_0} \times \left[ nk b_0^{(n)}a_k +  \sum_{i=1}^{k-1}(k-i)(n b_i^{(n)} a_{k-i} - a_i b_{k-i}^{(n)})\right].
  \end{equation}
\end{lemma}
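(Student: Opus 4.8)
The plan is to establish the polynomial identity directly from the chain rule, and then read off the recursion by matching coefficients of equal powers of $x$ on the two sides of that identity.

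First I would differentiate $G(x) = [g(x)]^n$ to get $G'(x) = n[g(x)]^{n-1} g'(x)$, then multiply through by $g(x)$ and use $[g(x)]^{n-1}g(x) = [g(x)]^n = G(x)$; this gives $G'(x)g(x) = n g'(x) G(x)$, which is the first claim. Since both sides are polynomials of degree $m + p - 1$, equating coefficients of $x^{k-1}$ for each $k \ge 1$ must yield a family of scalar identities. Writing $G'(x) = \sum_{i\ge 1} i b_i^{(n)} x^{i-1}$ and $g(x) = \sum_{j\ge 0} a_j x^j$ (with the usual conventions $a_j = 0$ for $j > p$ and $b_j^{(n)} = 0$ for $j > m$), the coefficient of $x^{k-1}$ on the left is $\sum_{i=1}^{k} i b_i^{(n)} a_{k-i}$, and on the right it is $n\sum_{i=1}^{k} i a_i b_{k-i}^{(n)}$, so
\[
  \sum_{i=1}^{k} i b_i^{(n)} a_{k-i} = n \sum_{i=1}^{k} i a_i b_{k-i}^{(n)}, \qquad k \ge 1.
\]

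Next I would solve this for $b_k^{(n)}$. Splitting off the $i = k$ term on the left gives $k a_0 b_k^{(n)}$; re-indexing the right-hand sum by $i \mapsto k-i$ and splitting off its $i = 0$ term gives $n k a_k b_0^{(n)}$. Rearranging, and then applying one further substitution $i \mapsto k-i$ to one of the two remaining sums over $i = 1,\ldots,k-1$, merges them into $\sum_{i=1}^{k-1}(k-i)\bigl(n b_i^{(n)} a_{k-i} - a_i b_{k-i}^{(n)}\bigr)$, which is exactly the combination in the statement. Dividing by $k a_0$ --- legitimate precisely because $a_0 \neq 0$ --- yields \eqref{aeq:polyrec}, and the initialization $b_0^{(n)} = a_0^n$ is immediate from $G(0) = g(0)^n$.

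There is no real obstacle here: the argument is a one-line calculus fact followed by elementary coefficient extraction. The only thing to be careful about is the index bookkeeping in the two re-indexing steps and the boundary conventions that let the finite sums be written uniformly; a useful sanity check is that the degenerate case $k = 1$ (empty sum) gives $b_1^{(n)} = n a_1 a_0^{n-1}$, the correct leading correction.
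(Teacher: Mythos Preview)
Your argument is correct. The paper in fact states Lemma~1 without proof, so there is nothing to compare against; your derivation---chain rule for $G'=n g^{n-1}g'$ multiplied through by $g$, followed by coefficient extraction at $x^{k-1}$ and the two re-indexings you describe---is the standard route and the bookkeeping checks out (including the $k=1$ sanity check).
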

Using Lemma~\ref{lemma1} with $g(x) = \theta(x)/x = \theta_1 + \cdots + \theta_p x^{p-1}$, we find that $\rho_i = \sum_{j=1}^i b_{i-j}^{(j)}$, 
where $b_{i-j}^{(j)}$ is given by~\eqref{aeq:polyrec} for $i-j \le j\cdot p$, and $b_{i-j}^{(j)} = 0$ otherwise.  In the simulations and data analyses of  sections~\ref{sec:sim} and~\ref{sec:exper}, we approximate all $\ar(p)$ filters by $\ma(50)$ filters.  Numerical experiments indicate that changing the order to $\ma(500)$ does not change the approximated autocorrelations by more than $10^{-14}$.

\subsubsection{Autocorrelation of the $\armapq$ Filter}

For the general $\armapq$ filter, we obtain the autocorrelation in two steps:
\begin{enumerate}
\item Let $\Delta Z_n = \sum_{j=0}^q \rho_j \ddX_{n-j}$, and calculate the autocorrelation of this $\ma(q)$ process using~\eqref{eq:fma-acf}.
\item Now we rewrite the original $\armapq$ process as
  \[
    \ddY_n = \sum_{i=1}^{p} \theta_i \ddY_{n-i} + \Delta Z_n,
  \]
  and we may approximate the autocorrelation of this $\ar(p)$ process by applying the technique of Appendix~\ref{appendix:far_acf} to $\acf_{\Delta Z}(n)$ obtained in Step 1.
\end{enumerate}

\subsection{Proof of Theorem~\ref{thm:restriction}}\label{appendix:thm1}

In order to parametrize the $\armapq$ filter such that it satisfies the high-frequency error hypothesis~\eqref{hypo:msd}, we begin by studying the relation between the MSD of a discrete-time univariate CSI process $\{X_n: n \ge 0\}$, and the power spectral density (PSD)
of its stationary increment process, $\ddX_n = X_{n+1} - X_n$.

For a stationary time series $\{\ddX_n: n \in \mathbb{Z}\}$ which is purely non-deterministic in the sense of the Wold decomposition~\citep[e.g.,][]{brockwell.davis91}, the PSD $S_{\ddX}(\om)$ is defined as the unique nonnegative symmetric integrable function for which the autocorrelation of $\ddX_n$ is given by
\begin{equation}\label{eq:psd2acf}
  \acf_{\ddX}(n) = \int_{-\pi}^{\pi}e^{-in \om} S_{\ddX}(\om) \ud \om.
\end{equation}

In order to prove Theorem~\ref{thm:restriction} we begin by proving the following lemma:
\begin{lemma}\label{lemma2}
  For two CSI process $X$ and $Y$ with corresponding increment processes $\ddX$ and $\ddY$, if $S_{\ddY}(\om)$ is positive in a neighborhood of $\om = 0$,
  and the PSD ratio satisfies
  \[
    \lim_{\om \rightarrow 0} \frac{S_{\ddX}(\om)}{S_{\ddY}(\om)} = 1,
  \]
  then $X$ and $Y$ satisfy the high-frequency error definition~\eqref{hypo:msd}, namely
  \[
    \lim_{n \rightarrow \infty} \frac{\MSD_{X}(n)}{\MSD_{Y}(n)} = 1.
  \]
\end{lemma}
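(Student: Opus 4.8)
The plan is to transfer the problem to the frequency domain, where the MSD of a discrete-time CSI process has a clean integral representation against its increment PSD in terms of a Fejér-type kernel, and where the hypothesis on the PSD ratio can be used almost verbatim.

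First I would combine the MSD--ACF identity \eqref{eq:acf2msd} with the spectral representation \eqref{eq:psd2acf}. Substituting $\acf_{\ddX}(h) = \int_{-\pi}^{\pi} e^{-ih\omega} S_{\ddX}(\omega)\,\ud\omega$ into $\MSD_X(n) = (n+1)\acf_{\ddX}(0) + 2\sum_{h=1}^n (n+1-h)\acf_{\ddX}(h)$ and using the symmetry of $S_{\ddX}$ to discard imaginary parts gives
\[
  \MSD_X(n) = \int_{-\pi}^{\pi} \Phi_n(\omega)\, S_{\ddX}(\omega)\,\ud\omega, \qquad \Phi_n(\omega) = \Bigl|{\textstyle\sum_{j=0}^{n}} e^{ij\omega}\Bigr|^2 = \frac{\sin^2\!\bigl((n+1)\omega/2\bigr)}{\sin^2(\omega/2)},
\]
the second equality being the standard closed form of the (unnormalized) Fejér kernel; this step is routine. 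The same formula holds with $X,\ddX$ replaced by $Y,\ddY$. I would then record three properties of $\Phi_n$: it is nonnegative; $\int_{-\pi}^{\pi}\Phi_n(\omega)\,\ud\omega = 2\pi(n+1)$, since only the zero Fourier mode survives; and for any fixed $\delta\in(0,\pi)$ one has $\Phi_n(\omega)\le \sin^{-2}(\delta/2)$ for all $\delta\le|\omega|\le\pi$ and all $n$.

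The main estimate then goes as follows. Fix $\epsilon>0$. By the positivity of $S_{\ddY}$ near the origin and the ratio hypothesis, choose $\delta\in(0,\pi)$ so that $S_{\ddY}(\omega)>0$ and $\bigl|S_{\ddX}(\omega)-S_{\ddY}(\omega)\bigr|\le\epsilon\, S_{\ddY}(\omega)$ whenever $0<|\omega|<\delta$. Writing $\MSD_X(n)-\MSD_Y(n) = \int_{-\pi}^{\pi}\Phi_n(\omega)\bigl(S_{\ddX}(\omega)-S_{\ddY}(\omega)\bigr)\,\ud\omega$ and splitting at $|\omega|=\delta$: the inner piece is bounded in absolute value by $\epsilon\int_{|\omega|<\delta}\Phi_n(\omega) S_{\ddY}(\omega)\,\ud\omega \le \epsilon\,\MSD_Y(n)$ (the integrand being nonnegative), while the outer piece is bounded by $\sin^{-2}(\delta/2)\bigl(\lVert S_{\ddX}\rVert_1+\lVert S_{\ddY}\rVert_1\bigr)=:M$, a constant independent of $n$ since both PSDs are integrable. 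Hence $\bigl|\MSD_X(n)/\MSD_Y(n)-1\bigr| \le \epsilon + M/\MSD_Y(n)$.

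Finally I would close the argument by showing $\MSD_Y(n)\to\infty$, after which letting $n\to\infty$ and then $\epsilon\downarrow 0$ yields the conclusion. This last step is where I expect the genuine work, and it is exactly what the positivity assumption on $S_{\ddY}$ near $0$ is for: since $\MSD_Y(n)=\int_{-\pi}^{\pi}\Phi_n S_{\ddY}\,\ud\omega$ equals $2\pi(n+1)$ times the Fejér average $(F_n * S_{\ddY})(0)$, Fejér's theorem gives $\MSD_Y(n)\sim 2\pi(n+1)S_{\ddY}(0)\to\infty$ when $S_{\ddY}$ is continuous at $0$ with $S_{\ddY}(0)>0$, and when $S_{\ddY}$ vanishes at the origin with power-law behaviour $S_{\ddY}(\omega)\asymp|\omega|^{1-\alpha}$ (as for fractional Gaussian noise) the substitution $u=(n+1)\omega/2$ in the portion of the integral near $0$ shows $\MSD_Y(n)\asymp(n+1)^{\alpha}\to\infty$, consistent with $\MSD_{B^\alpha}(t)=t^\alpha$. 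Positivity near $0$ is precisely what excludes the degenerate situation where $S_{\ddY}$ decays so fast at the origin that $\MSD_Y(n)$ stays bounded; I would therefore carry out this step under the mild regularity of $S_{\ddY}$ at $0$ supplied by the paper's standing assumptions, flagging it rather than the main estimate as the delicate ingredient.
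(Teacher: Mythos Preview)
Your proof is correct and takes a genuinely different route from the paper. Both arguments begin with the same spectral representation $\MSD_X(n)=\int_{-\pi}^{\pi}\Phi_n(\omega)S_{\ddX}(\omega)\,\ud\omega$ via the Fej\'er kernel, but thereafter diverge. The paper rewrites $\MSD_X(n)/\MSD_Y(n)$ as a ratio of Fej\'er convolutions $(S_{\ddX}*F_n)(0)/(S_{\ddY}*F_n)(0)$ and argues, via Fej\'er summability together with continuity of the convolved ratio, that this tends to $S_{\ddX}(\omega_0)/S_{\ddY}(\omega_0)$ for a nearby point $\omega_0$ where a.e.\ convergence holds; this sidesteps any explicit growth estimate on $\MSD_Y(n)$ but leans on some delicate uniformity claims (the continuity radius $\omega_1$ and the uniform convergence of $f_n$ to $f$). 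Your approach instead splits $\MSD_X(n)-\MSD_Y(n)$ at $|\omega|=\delta$, bounds the inner piece by $\epsilon\,\MSD_Y(n)$ directly from the ratio hypothesis, and uses the uniform bound $\Phi_n(\omega)\le\sin^{-2}(\delta/2)$ on the outer piece to get a constant independent of $n$. This is more elementary and self-contained, trading the Fej\'er black box for the single growth fact $\MSD_Y(n)\to\infty$, which you rightly isolate as the only place where regularity of $S_{\ddY}$ at the origin enters. Under the paper's standing assumption that $S_{\ddY}>0$ on a neighborhood of $0$ (read, as in all of the paper's applications, to include a uniform lower bound or continuity there), your growth step is immediate since $\int_{|\omega|<\delta}\Phi_n(\omega)\,\ud\omega=2\pi(n+1)-O(1)$.
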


\begin{proof}
  Using~\eqref{eq:acf2msd} and~\eqref{eq:psd2acf} we can relate $\MSD_{X}(n)$ to $S_{\ddX}(\om)$, such that
  \[
    \MSD_{X}(n+1) - \MSD_{X}(n) = \int_{-\pi}^{\pi}  \sum_{j = -n}^{n} e^{-i j \om} S_{\ddX}(\om) \ud\om = 	\int_{-\pi}^{\pi} D_n(\om) S_{\ddX}(\om) \ud\om,
  \]
  where $D_n(\om) = \sum_{j = -n}^{n} e^{-i j \om}$ is the $n$-th order Dirichlet kernel. Thus we have
  \begin{equation}\label{eq:psd2msd}
    \MSD_{X}(n) = \int_{-\pi}^{\pi} \sum_{k=0}^{n-1} D_k(\om) S_{\ddX}(\om) \ud\om = n \int_{-\pi}^{\pi} F_n(\om) S_{\ddX}(\om) \ud\om,
  \end{equation}
  where $F_n(\om) = \tfrac{1}{n} \sum_{k=0}^{n-1} D_k(\om)$ is the $n$-th order Fej\'er kernel. Since $F_n(\om)$ is symmetric about 0, we may rewrite $\MSD_{X}(n)$ as a convolution integral
  \begin{align*}
    \MSD_{X}(n) = n 2\pi \times \dfrac{1}{2\pi} \int_{-\pi}^{\pi} S_{\ddX}(\om) F_n(-\om) \ud\om = n 2\pi \times \{S_{\ddX} * F_n\}(0).
  \end{align*}

  By the Fej\'er kernel's summability property, we have
  \begin{align*}
    \{S_{\ddX} * F_n\}(\om) \rightarrow S_{\ddX}(\om) \quad \textnormal{a.e.}, \\
    \{S_{\ddY} * F_n\}(\om) \rightarrow S_{\ddY}(\om) \quad \textnormal{a.e.}.
  \end{align*}
  Since $S_{\ddY}(\om) > 0$ in a  neighborhood of $\om = 0$, we may thus find $\varepsilon > 0$ such that both $\{S_{\ddY} * F_n\}(0) \rightarrow S_{\ddY}(0) > 0$ and $\{S_{\ddY} * F_n\}(\om_0) \rightarrow S_{\ddY}(\om_0) > 0$ for $|\om_0| < \varepsilon$. Given this, we can express the MSD ratio as
  \begin{align*}
    \frac{\MSD_{X}(n)}{\MSD_{Y}(n)}
    & = \frac{\{S_{\ddX} * F_n\}(0)}{\{S_{\ddY} * F_n\}(0)} \\
    & = \frac{\{S_{\ddX} * F_n\}(0)}{\{S_{\ddY} * F_n\}(0)} - \frac{\{S_{\ddX} * F_n\}(\om_0)}{\{S_{\ddY} * F_n\}(\om_0)} \\
    & \phantom{=} + \frac{\{S_{\ddX} * F_n\}(\om_0)}{\{S_{\ddY} * F_n\}(\om_0)} - \frac{S_{\ddX}(\om_0)}{S_{\ddY}(\om_0)} + \frac{S_{\ddX}(\om_0)}{S_{\ddY}(\om_0)}.
  \end{align*}
  Since $S_{\ddX}(\om)$ and $F_n(\om)$ are both integrable and $\int F_n(\om) d\om = 1$, the convolution $\{S_{\ddX} * F_n\}(\om)$ is a uniformly continuous function.  The same argument applies to $\{S_{\ddY} * F_n\}(\om)$. Since $f_n(\om) = \frac{\{S_{\ddX} * F_n\}(\om)}{\{S_{\ddY} * F_n\}(\om)}$ is a ratio between two continuous functions, it is also a continuous function, which means that we can find $\om_1 > 0$ such that for $|\om| < \om_1$ we have $|f_n(0) - f_n(\om)| < \frac{\varepsilon}{3}$.
  Moreover, by Fej\'er summability we have
  \[
    f_n(\om) = \frac{\{S_{\ddX} * F_n\}(\om)}{\{S_{\ddY} * F_n\}(\om)} \rightarrow \frac{S_{\ddX}(\om)}{S_{\ddY}(\om)} = f(\om) \quad \textnormal{a.e.},
  \]
  such that we may find $N_1$ such that $|f_n(\om) - f(\om)| < \tfrac{\varepsilon}{3}$ uniformly in $\om$ for $n > N_1$.
  Thus, if
  \[
    \lim_{\om \rightarrow 0} \frac{S_{\ddX}(\om)}{S_{\ddY}(\om)} = 1,
  \]
  we may find $\om_2 > 0$ such that $|f(\om) - 1| < \tfrac{\varepsilon}{3}$ for $|\om| < \om_2$, and thus for $n > N_1$ and any $\om$ such that $|\om| < \min\{\om_1,\om_2\}$, we have
  \begin{align*}
    |\frac{\MSD_{X}(n)}{\MSD_{Y}(n)} - 1|
    & \le |f_n(0) - f_n(\om)| + |f_n(\om) - f(\om)| + |f(\om) - 1| \\
    & \le \frac{\varepsilon}{3} + \frac{\varepsilon}{3} + \frac{\varepsilon}{3} = \varepsilon,
  \end{align*}
  such that
  \[
    \lim_{n \rightarrow \infty} \frac{\MSD_{X}(n)}{\MSD_{Y}(n)} = 1.
  \]
\end{proof}

To complete the proof of Theorem~\ref{thm:restriction}, 
we apply Lemma~\ref{lemma2} to the CSI process $X_n$ and its $\armapq$ filter $Y_n$ 
as defined by~\eqref{eq:farma}.  That is, for the increment processes $\ddX_n$ and $\ddY_n = \sum_{i=1}^p \theta_i \ddY_{n-i} + \sum_{j=0}^q \rho_j \ddX_{n-i}$,
\begin{equation}
  \lim_{\om \rightarrow 0} \frac{S_{\ddX}(\om)}{S_{\ddY}(\om)} = \lim_{\om \rightarrow 0} \frac{|1 - \sum_{k=1}^p \theta_k \cdot e^{-i k \om}|^2}{|\sum_{j=0}^q \rho_j e^{-ij\om}|^2} =  \left\vert\frac{1-\sum_{i=1}^p \theta_i}{\sum_{j=0}^q \rho_j}\right\vert^2.
\end{equation}
Thus by setting $\rho_0 = 1 - \sum_{i=1}^p \theta_i - \sum_{j=1}^q \rho_j$, we have
\[
  \lim_{\om\rightarrow 0} \frac{S_{\ddX}(\om)}{S_{\ddY}(\om)} = \left(\frac{1-\sum_{i=1}^p \theta_i}{\sum_{j=0}^q \rho_j}\right)^2 = 1 \quad \implies \quad \lim_{n\rightarrow\infty} \frac{\MSD_{X}(n)}{\MSD_{Y}(n)} = 1,
\]
which completes the proof of Theorem~\ref{thm:restriction}.

\subsection{Proof of Theorem~\ref{thm:hfapprox}}\label{appendix:thm2}

The complete statement of Theorem~\ref{thm:hfapprox} is as follows.

Let $\Xts = \{X_n: n \ge 0\}$ denote the true positions of a CSI process, for which $\Yts \{Y_n: n \ge 0\}$ is the measurement process satisfying the high-frequency error definition~\eqref{hypo:msd}.
For the corresponding increment processes $\Delta \Xts = \{\ddX_n: n \in \mathbb{Z}\}$ and $\Delta \Yts = \{\ddY_n: n \in \mathbb{Z}\}$, suppose the PSD ratio
\[
  g(\om) = \frac{ S_{\ddY}(\om) }{ S_{\ddX}(\om) }
\]
is continuous on the interval $\om \in [-\pi, \pi]$.
Then there exists an $\armapq$ noise model $\bm{\mathcal{Y}^\star} = \{Y^\star_n: n \ge 0\}$ satisfying~\eqref{eq:farma} such that for all $n \ge 0$ we have
\begin{equation}\label{aeq:armaeps}
  \left\vert\frac{ \MSD_{Y^\star}(n)}{\MSD_{Y}(n)  } -1 \right\vert < \epsilon.	
\end{equation}.

\begin{proof}
  In order to show that there exits an $\armapq$ process
  \[
    Y^\star_n = \sum_{i=1}^p \theta_i Y^\star_{n-i} + \sum_{j=0}^q \rho_j X_{n-j},
  \]
  satisfying~\eqref{aeq:armaeps}, 
  we use~\eqref{eq:psd2msd} to write
  \begin{equation}
    \begin{split}
      \left\vert \frac{ \MSD_{Y^\star}(n)  }{ \MSD_{Y}(n)  } -1 \right\vert
      & = \frac{|\MSD_{Y^\star}(n) - \MSD_{Y}(n)|}{ \MSD_{Y}(n)  } \\
      & \leq \frac{ \int_{-\pi}^\pi F_n(\om) \cdot |S_{\ddY\star}(\om) - S_{\ddY}(\om)| \ud \om  }{ \int_{-\pi}^\pi F_n(\om) S_{\ddY}(\om) \ud \om } \\
      & = \frac{ \int_{-\pi}^\pi |r(\om) - g(\om)| \cdot F_n(\om) S_{\ddX}(\om) \ud \om }{ \int_{-\pi}^\pi F_n(\om) S_{\ddY}(\om) \ud \om },
    \end{split}
  \end{equation}
  where $g(\om) = S_{\ddY}(\om)/S_{\ddX}(\om)$ and
  \[
    r(\om) = \frac{S_{\ddY^\star}(\om)}{S_{\ddX}(\om)} = \left\vert\frac{\sum_{j=0}^q \rho_j e^{-ij\om}}{1 - \sum_{k=1}^p \theta_k \cdot e^{-i k \om}}\right\vert^2.
  \]
  Because $g(\om)$ is a ratio of nonnegative symmetric functions, it is also nonnegative symmetric, and since it is continuous, it satisfies the definition of a continuous PSD.  
  Therefore, by Corollary 4.4.1 of~\cite{brockwell.davis91}, we can find a stationary $\ma(q)$ process
  \[
    Z_n = \sum_{j=0}^q \rho_j \eta_{n-j}, \qquad \eta_n \iid \N(0, 1)
  \]
  satisfying parameter restrictions~\eqref{eq:stat}, such that if $S_{Z}(\om) = |\sum_{j=0}^q \rho_j e^{-ij\om}|^2$ is the PSD of this process,
  \[
    |S_Z(\om) - g(\om)| < \varepsilon_0 \textnormal{ for } \om \in [-\pi, \pi].
  \]
  Therefore, let $\ddY_n^\star = \sum_{j=0}^q \rho_j \ddX_n$, such that $r(\om) = S_{\ddY^\star}(\om)/S_{\ddX}(\om) = S_Z(\om) = |\sum_{j=0}^q \rho_j e^{-ij\om}|^2$.
Then we have
  \begin{equation}
    \begin{split}
      \left\vert \frac{ \MSD_{Y^\star}(n)  }{ \MSD_{Y}(n)  } -1 \right\vert
      & \le \frac{ \int_{-\pi}^\pi |r(\om) - g(\om)| \cdot F_n(\om) S_{\ddX}(\om) \ud \om }{ \int_{-\pi}^\pi F_n(\om) S_{\ddY}(\om) \ud \om } \\
      & \leq \varepsilon_0 \cdot \frac{ \int_{-\pi}^\pi  F_n(\om) S_{\ddX}(\om) \ud \om }{ \int_{-\pi}^\pi F_n(\om) S_{\ddY}(\om) \ud \om }
      = \varepsilon_0 \cdot\frac{\MSD_{X}(n)}{\MSD_{Y}(n)}.
    \end{split}
  \end{equation}
  Since $\lim_{n\rightarrow\infty} \MSD_{X}(n)/\MSD_{Y}(n)$ exists,  
  there exists $L > 0$ such that for every $n$ we have
  \[
    0 \leq \frac{\MSD_{X}(n)}{\MSD_{Y}(n)} \leq L.
  \]
  Thus by letting $\varepsilon_0 = \varepsilon/L$, for every $n$ we have
  \[
    \left\vert \frac{ \MSD_{Y^\star}(n)  }{ \MSD_{Y}(n)  } -1 \right\vert < \varepsilon.
  \]
\end{proof}


\section{Calculations for the GLE Process}\label{appendix:GLE}

For the GLE process $X(t)$ defined by~\eqref{eq:gle} with sum-of-exponentials memory kernel
\[
  \phi(t) = \frac{\nu}{K} \sum_{k=1}^{K} \exp(-|t| \alpha_k),
\]
\cite{mckinley.etal09} derive its MSD to be
\begin{equation}\label{aeq:glemsd}
  \MSD_X(t) = \frac{2\kB T}{\nu/K} \left(C_0^2 t + \sum_{j=1}^{K-1} \frac{C_j^2}{r_j}(1-e^{-r_jt})\right), 
\end{equation}
where $r_1, \ldots, r_{K-1}$ are the roots of $q(y) = \prod_{k=1}^{K}(y - \alpha_k)$, and
\begin{equation}
C_0 = \left( \sum_{k=1}^K \frac{1}{\alpha_k} \right)^{1/2}, \qquad C_j = \frac{1}{r_j} \times \frac{ \sqrt{\sum_{k=1}^{K}\frac{1}{(1 - r_j \alpha_k)^2} } }{ (\sum_{k=1}^{K}\frac{\alpha_k}{1 - r_j \alpha_k})^2 - \sum_{k=1}^{K}\frac{\alpha_k}{1 - r_j \alpha_k} }.
\end{equation}
For the particular case of the Rouse memory kernel
\[
  \phi(t) = \frac{\nu}{K} \sum_{k=1}^{K} \exp(-|t| / \tau_k), \quad \tau_k = \tau \cdot (K / k)^{\gamma},
\]
\cite{mckinley.etal09} show that for sufficiently large $K$, the MSD exhibits (anomalous) transient subdiffusion,
\[
  \MSD_X(t) = \begin{cases}
    2 \cdot \Deff \cdot t^{\aeff}  & \tmin < t < \tmax \\
    2 \cdot D_{\textnormal{min}} \cdot t &  t<\tmin \\
    2 \cdot D_{\textnormal{max}} \cdot t &  t>\tmax.
  \end{cases}
\]
This is illustrated in Figure~\ref{fig:gle-fit} with $K = 300$ and GLE parameters $\gamma = 1.67$, $\tau = 0.01$, $\nu = 1$.
\begin{figure}[htbp!]
  \centering
  \includegraphics[width = .8\textwidth]{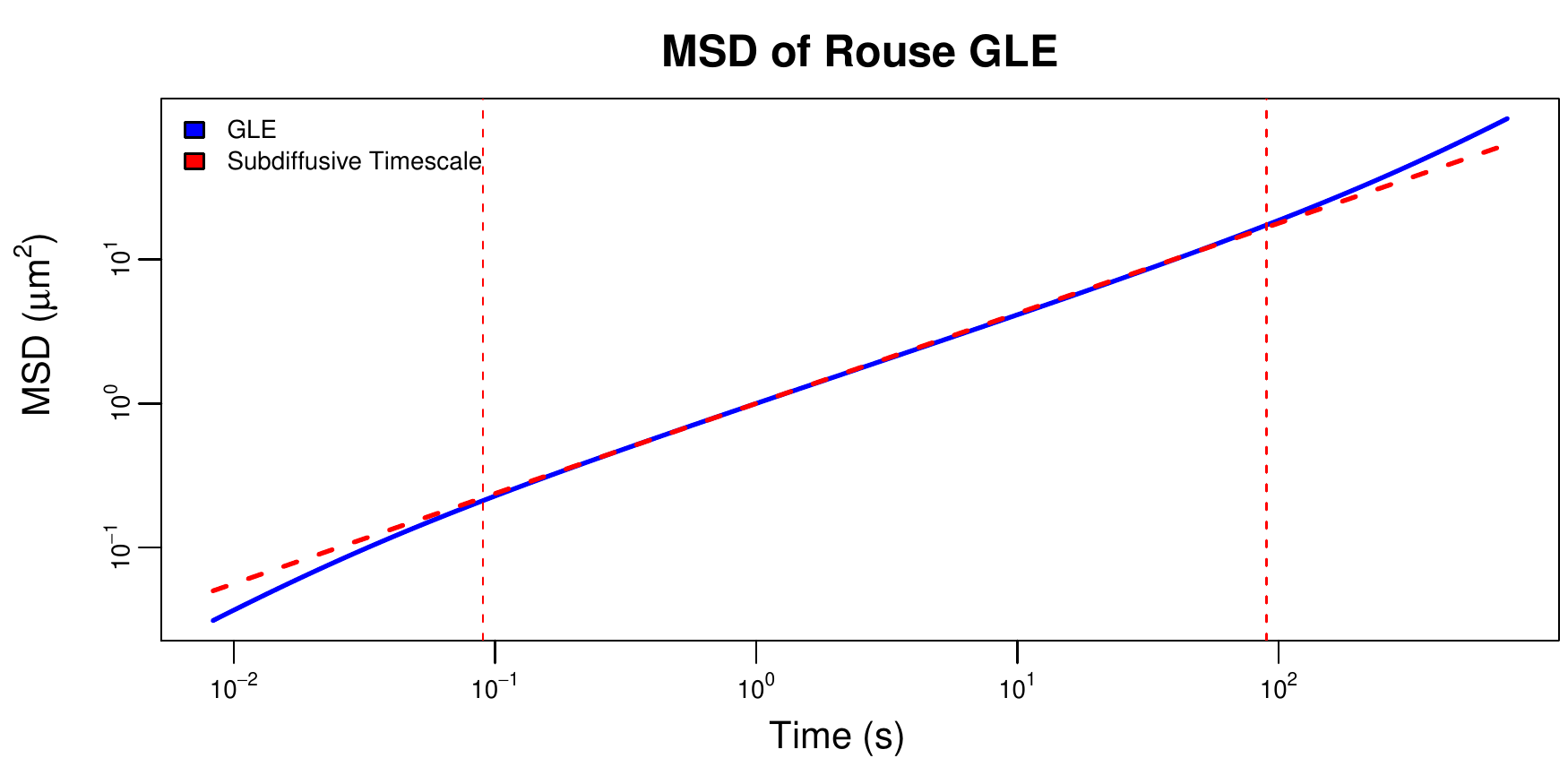}
  \caption{MSD of a Rouse GLE with $K = 300$ and $\gamma = 1.67$, $\tau = 0.01$, $\nu = 1$ (solid blue line).
    Also displayed is the subdiffusion timescale $(\tmin, \tmax)$ along with the power law $\MSD_X(t) = 2 \Deff \cdot t^{\aeff}$ on that range (red dotted lines).}
  \label{fig:gle-fit}
\end{figure}
Figure~\ref{fig:gle-fit} also displays the subdiffusion timescale $(\tmin, \tmax)$ along with the power law $\MSD_X(t) = 2 \Deff \cdot t^{\aeff}$ on that range.  The values of $(\tmin,\tmax,\aeff,\Deff)$ are determined from the GLE parameters $K$ and $\pph = (\gamma, \tau, \nu)$ via the following method.

\begin{enumerate}
\item Calculate $x_n = \log(t_n)$ and $y_n = \log \MSD_X(t_n \mid \pph, K)$ on a range of time points $t_0, \ldots, t_N$.  These should be picked on a fine grid such that $t_0 \ll \tmin$ and $t_N \gg \tmax$.
\item Let $\bt = (\tmin, \tmax)$, and let $\It = \{n: \tmin < t_n < \tmax\}$.  Then for any $\bt$ we calculate $\aeff^{(\bt)}$ and $\Deff^{(\bt)}$ via least-squares:
\begin{equation}
\aeff^{(\bt)} = \frac{\sum_{n\in \It}(y_n - \bar y)(x_n - \bar x)}{\sum_{n\in \It}(x_n - \bar x)^2}, \qquad \Deff^{(\bt)} = \tfrac 1 2 \exp(\bar y - \aeff^{(\bt)} \bar x),
\end{equation}
where $\bar x = \frac{1}{|\It|} \sum_{n\in\It} x_n$ and $\bar y = \frac{1}{|\It|} \sum_{n\in\It} y_n$ are the corresponding averages over the indices in $\It$.
\item The subdiffusion timescale $\bt$ is determined by solving the constrained optimization problem
\begin{multline*}
  \argmax_{\bt} |\log(\tmax) - \log(\tmin)| \\
  \textnormal{subject to} \qquad \max_{n\in \It} \left\vert\frac{\aeff^{(\bt)} \cdot x_n + \log(2\Deff^{(\bt)}) - y_n}{y_n}\right\vert < \kappa,
\end{multline*}
where $\kappa$ is a tolerance for departure from a perfect power law over the subdiffusive range.  In Figure~\ref{fig:gle-fit} and the calculations of Section~\ref{sec:gle} we have used $\kappa = 1\%$.  This optimization problem can be solved in $\mathcal O(N^2)$ steps by trying all combinations of $\tmin$ and $\tmax$ in the set $\{t_0, \ldots, t_N\}$.
\end{enumerate}


\bibliographystyle{stdref}
\bibliography{cmerr-ref}

\end{document}